\crefname{figure}{Figure}{Figures}
\Crefname{figure}{Figure}{Figures}
\crefname{section}{Section}{Sections}
\Crefname{section}{Section}{Sections}
\crefname{listing}{Listing}{Listings}
\Crefname{listing}{Listing}{Listings}
\newtheorem{definition}{Definition}
\newtheorem{lemma}{Lemma}
\newtheorem{corollary}{Corollary}
\newtheorem{theorem}{Theorem}
\bfseries\color{blue},
\itshape\color{green!60!black},
\tiny\color{gray},
\lstdefinelanguage{quint}{
  morekeywords={action, run, bool, all, any, and, or, val, match},
  sensitive=true,                             morecomment=[l]{//},                        morecomment=[s]{/*}{*/},                    morestring=[b]",                            morecomment=[l]{//},
} 
\newcommand{\Corr}[0]{\textit{Correct}}
\newcommand{\Faulty}[0]{\textit{Faulty}}
\newcommand{\tqc}[0]{\mathbb{T}}
\newcommand{\cqc}[0]{\mathbb{C}}
\newcommand{\signers}[1]{\textit{Signers}(#1)}
\newcommand{\eqmod}{\approx}
\newcommand{\leader}[1]{\textit{leader}(#1)}
\newcommand{\delay}{\Delta_{\text{rcv}}}
\newcommand{\timeout}{\Delta_{\text{timeout}}}
\newcommand{\gst}{\text{GST}}
\newcommand{\tsync}{\Delta_{\text{sync}}}
\newcommand{\rcv}[1]{\searrow{}{#1}}
\newcommand{\snd}[1]{{#1}\nearrow{}}
\newcommand{\msg}[1]{\langle{#1}\rangle}
\newcommand{\invOneName}{\textsc{LocalHighVote}}
\newcommand{\invTwoName}{\textsc{HighVoteInTimeoutQC}}
\newcommand{\invOne}[1]{\invOneName({#1})}
\newcommand{\invTwo}[1]{\invTwoName({#1})}
\newcommand{\tlap}{$\textsc{TLA}^{+}$}
\newcommand{\NatZero}{\mathbb{N}_0} 
\title{ChonkyBFT: Consensus Protocol of ZKsync}
\author{
    Bruno França\textsuperscript{1}, 
    Denis Kolegov\textsuperscript{1}, 
    Igor Konnov\textsuperscript{2},  
    and
    Grzegorz Prusak\textsuperscript{1} 
}
\date{
    \textsuperscript{1}Matter Labs \\
    \textsuperscript{2}Independent formal methods researcher
}
\begin{document}

\maketitle

\begin{abstract}

We present ChonkyBFT, a partially-synchronous Byzantine fault-tolerant (BFT) consensus protocol used in the ZKsync system. The proposed protocol is a hybrid protocol inspired by FAB Paxos, Fast-HotStuff, and HotStuff-2. It is a committee-based protocol with only one round of voting, single slot finality, quadratic communication, and $n \ge 5f+1$ fault tolerance. This design enables its effective application within the context of the ZKsync rollup, achieving its most critical goals: simplicity, low transaction latency, and reduced system complexity. The target audience for this paper is the ZKsync community and others worldwide who seek assurance in the safety and security of the ZKsync protocols. The described consensus protocol has been implemented, analyzed, and tested using formal methods.

\end{abstract}

\section{Introduction}

In recent years, research on consensus algorithms has often focused on theoretical optimizations that may not align with the practical challenges of implementing these algorithms. This has led to researchers prioritizing metrics that are less impactful in real-world deployments. For instance, tables comparing algorithms frequently highlight metrics that, while theoretically appealing, offer limited practical significance. Below, we outline commonly emphasized metrics that may not adequately reflect real-world performance considerations.
\begin{itemize}

    \item \textbf{Authenticator complexity}. Optimizing to have fewer signatures was a priority decades ago when crypto operations were expensive. Today, digital signatures are fast and small. However, many papers still report this measure and even go as far as suggesting threshold signatures instead of multisignatures, which introduces a much more complex step of distributed key generation instead of spending some more milliseconds on verifying the signatures.

    \item \textbf{Message complexity}. In theory, reducing the number of messages exchanged across the network should improve the algorithm's performance. However, in practice, the actual performance gain depends on where the system bottleneck lies. For instance, even with linear communication, if the leader must still send and receive~$N$ messages, the improvement may be negligible. Moreover, this approach often oversimplifies the cost of messages, treating all messages equally. In reality, a block proposal can be several megabytes in size, whereas a block commit may only be a few kilobytes.

    \item \textbf{Block latency}. Block times of, for example, 0.1 seconds are not necessarily a reliable performance indicator if finality requires waiting for 100 blocks. The key metric that matters is the time it takes for a user to see their transaction finalized. Some algorithms claim to achieve only one round of voting; however, they introduce an additional round within the block broadcast mechanism. Or pipeline the voting rounds, thus requiring several blocks to finalize. These approaches can result in higher user-perceived latency, despite achieving shorter block times.
\end{itemize}

While the above metrics are often highlighted, other aspects of consensus protocol design have a far greater impact in practical settings. Below, we discuss the metrics that we believe are most relevant for real-world consensus algorithm performance.

\begin{itemize}
    \item \textbf{Systemic complexity}. This aligns with the discussion on systemic versus encapsulated complexity \cite{Buterin2022Complexity}. Consensus algorithms do not operate in isolation; they are designed to support a wide range of applications. A clear example of this issue is the distinction between probabilistic and provable finality. Algorithms that provide probabilistic finality introduce additional complexity for applications, as exchanges, multi-chain dApps, hybrid dApps, block explorers, wallets, and similar systems must determine an appropriate number of confirmations for each chain. In contrast, algorithms with provable finality deliver a clear and deterministic signal that applications can rely on. This distinction is significant enough that even Ethereum is planning to adopt single-slot finality.

    \item \textbf{Simplicity}. When designing and implementing an algorithm, it is essential to consider the trade-off between complexity and practicality. While saving one round-trip in an optimistic scenario may seem advantageous, it may not be worthwhile if the algorithm is too complex to formalize and model effectively. Additionally, a complex implementation that requires significant resources—such as multiple engineers and extensive audits—can introduce further challenges. Simple algorithms that are straightforward to implement and can be formally proven offer greater security and reliability. A bug that results in downtime, or worse, safety violations, has a far more detrimental impact on user experience than marginally slower block times.

    \item \textbf{Transaction latency}. As previously discussed, the only latency that truly matters is the one experienced by the user.

\end{itemize}

\textbf{Lessons learned.}
For our specific use case, we have gained several insights from researching and implementing previous consensus algorithms:
\begin{enumerate}

    \item Chained consensus is not beneficial for our use case. It does not improve throughput or latency but adds unnecessary systemic complexity. Instead, we finalize every block.
    
    \item Reducing fault tolerance can minimize the number of voting rounds, as demonstrated by FaB Paxos. By decreasing our fault tolerance requirement from $3f+1$ to $5f+1$, we are able to finalize consensus in a single voting round \cite{2roundBFT}.
    
    \item Linear communication is not always advantageous. Quadratic communication among replicas simplifies security, as there are fewer cases where the impact of a malicious leader needs to be considered. It also simplifies implementation by allowing a clear separation of the leader component, as well as view changes, where constant timeouts suffice. For example, Jolteon/Ditto \cite{Gelashvili2021JolteonDitto} adopted this approach after encountering challenges while implementing HotStuff. Additionally, the performance trade-off is likely minimal, as demonstrated by ParBFT \cite{Dai2023ParBFT}.
    
    \item Re-proposals can be used to ensure that no "rogue" blocks exist—a challenge that has received little attention so far (to the best of our knowledge) and is particularly relevant to public blockchains. In committee-based consensus algorithms, it is possible for a commit quorum certificate (to use HotStuff's terminology) to be formed without being received by enough replicas. This can result in a timeout and the proposal of a new block. Most algorithms address this by declaring the old block invalid. While all honest replicas will agree on which block is canonical, a participant who only sees the old block without knowledge of the timeout may mistakenly believe that it was finalized. This undermines the desirable property of verifying a block's inclusion in the chain by inspecting only the block itself, without needing the entire chain. To address this, we require that block proposals following a timeout (where a commit quorum certificate might have been formed) re-propose the previous block. This approach guarantees that any block with a valid commit quorum certificate is part of the chain—it may not have been finalized in the current view, but it was certainly finalized.
    
    \item Always justify messages to remove time dependencies, a lesson we learned from Fast-HotStuff. Messages should carry sufficient information to allow any replica to independently verify their validity without relying on additional data (except for previous blocks, which are external to the consensus algorithm). Failing to do so can introduce subtle timing dependencies. For example, Tendermint contained a bug—discovered years later—where the leader had to wait for the maximum network delay at the end of each round to avoid a deadlock. If this wait did not occur, the system could lock. Interestingly, HotStuff-2 reintroduces this timing dependency to eliminate one round-trip, which significantly increases the complexity of modeling and implementing the system.
    
    \item Make garbage collection and reconfiguration integral parts of the algorithm. These components will inevitably need to be implemented, and failing to specify and model them upfront may result in awkward and inefficient implementations later.
    
\end{enumerate}

FaB Paxos satisfies the first 4 points, and Fast-HotStuff satisfies the 5th. ChonkyBFT is basically FaB Paxos with some ideas from Fast-HotStuff/HotStuff-2.

\section{ChonkyBFT Overview}
In this section, we provide an intuitive overview of our consensus protocol to help readers grasp the fundamental concepts before delving into the formal specification. As in many consensus protocols for blockchains, the goal
of ChonkyBFT is to extend the latest known block of the blockchain with a set of new transactions, which
collectively form a new block. Whenever correct replicas manage to do so, we say that they commit the new block.

The system comprises $n \ge 5f+1$ replicas, where $f$ is the maximum number of faulty replicas to be
tolerated.

The consensus protocol operates in successive \textit{views}, each of them 
having an algorithmically-defined leader called~$leader(v)$ for each view~$v$,
which can round-robin in the simplest case, but it does not have to be
round-robin. In each view, the leader proposes a single block, and the other
replicas attempt to commit it. The protocol must account for three primary
cases:

\begin{enumerate}

\item \textbf{Successful commit in a view:} The proposal is committed by all correct replicas,
  and the protocol progresses smoothly.

\item \textbf{Partial commit with timeout in view:} Some correct replicas are timed out, while other replicas have committed the proposal. 

\item \textbf{Timeout without commit in view:} All correct replicas time out, and the proposal is not committed by any replica. 

\end{enumerate}

The goal is to optimize for the first case, enabling the protocol to operate at maximum efficiency during normal conditions within the synchrony model. For the second case, mechanisms are in place to ensure that the next leader re-proposes the block to achieve eventual agreement among all correct replicas. In the third case, the protocol may still re-propose the missed block to maintain safety, as it cannot distinguish between cases two and three without further information.

Each view consists of two or three phases: the \textit{prepare} phase, the \textit{commit} phase, and optional 
\textit{timeout} phase. A new view starts either when the previous proposal is successfully committed or when replicas experience a timeout. 

\subsection{Protocol Mechanics: The Happy Path}

The leader broadcasts a \textit{Proposal} message containing the proposed \textit{block} and a \textit{justification} proving that the leader is authorized to propose in the current view. 

Upon receiving the \textit{Proposal} message, each replica verifies the proposal. If valid, the replica broadcasts a \textit{SignedCommitVote} message, which is a signature over the replica's \textit{CommitVote} message, which includes the proposal's view and also the block number and hash, inferred from the received justification.

If a replica collects $n-f$ such commit votes, it constructs a \textit{quorum certificate} denoted by 
\textit{CommitQC}. This certificate proves the block can be safely committed, even if the leader fails to disseminate 
it. The replica then commits the block, broadcasts to all replicas a \textit{NewView} message containing the 
\textit{CommitQC}, and advances to the next view. The new leader uses the constructed or received commit quorum certificate as a justification for entering the new view and proving its privilege to propose in this view.

Replicas can commit to a block through either of the following:

\begin{itemize}
\item Collecting $n-f$ votes sufficient for creating a \textit{CommitQC} at the end of a view.

\item Receiving \textit{CommitQC} indirectly in a \textit{Proposal}, \textit{NewView} or \textit{Timeout} message.
\end{itemize}

Figure~\ref{fig:happy1} illustrates the simplest happy path scenario for~$n=6$ and~$f=1$. In this scenario,
replica~0 is the leader in view~1. It broadcasts its proposal for a new block. We call this proposal~$P_1$
and use the notation~$\snd{P_1}$ for broadcasting the proposal~$P_1$. As soon as the replicas from~0 to~4
receive the proposal~$P_1$, they broadcast their commit votes~$C_0, \dots, C_4$ to all replicas, respectively.
Finally, the replicas from~0 to~4 receive the votes from one another, which we write as $\rcv{C_{0..4}}$.
Then, they collect their commit quorums
of $n - f = 5$ votes, commit the block, and send their new-view messages, denoted with~$\snd{N_i}$.
After that, they all switch to view~2. Note that the faulty replica~5 is not involved in this scenario at all:
It simply ignores all messages and does not send anything.

In the above example, all correct replicas had to collect five commit votes and issue their commit quorum
certificates. Figure~\ref{fig:happy2} shows an optimal scenario. Once replica~1 constructs its
quorum of five commit votes, it sends its new-view message~$N_1$. The other correct replicas~$0, 2, 3, 4$
receive the message~$N_1$, verify the commit quorum certificate, immediately commit the block, and switch
to view~2. Now the leader for view-2 can use the \texttt{CommitQC} as justification to propose a new block.

\begin{figure}

\begin{tikzpicture}[
    message/.style={->, >=latex, thick},
    phase/.style={dotted, thick},
    process/.style={draw, dashed},
    tag/.style={fill=gray!20!white,rounded corners=4pt},
]

\foreach \i/\name in {0/0, 1/1, 2/2, 3/3, 4/4, 5/5} {
    \node[anchor=west] (p\i) at (0, -\i) {\name};
    \draw[process] (1, -\i) -- (12, -\i);
}

\foreach \x/\name in {1/prepare (v=1), 6/, 11/} {
    \draw[phase] (\x, 0.5) -- (\x, -5);
}
\foreach \x/\name in {3.25/prepare (v=1), 8.5/commit (v=1), 12.5/prepare (v=2)} {
    \node[above] at (\x, 1) {\name};
}

\node[tag] at (2.2, .4) { $\rcv{P_1}; \snd{C_0}$ };

\draw[message,color=green!50!black] (1, 0) -- (3.5, -1);
\draw[message,color=green!50!black] (1, 0) -- (3, -2);
\draw[message,color=green!50!black] (1, 0) -- (3, -3);
\draw[message,color=green!50!black] (1, 0) -- (2.5, -4);
\draw[message,color=green!50!black] (1, 0) -- (2, -5);

\node[tag] at (4.2, -.5) { $\rcv{P_1};\snd{C_1}$ };
\node[tag] at (3.8, -1.5) { $\rcv{P_1};\snd{C_2}$ };
\node[tag] at (3.9, -2.5) { $\rcv{P_1};\snd{C_3}$ };
\node[tag] at (3.7, -3.5) { $\rcv{P_1};\snd{C_4}$ };

\node[tag] at (8.3, .4) { $\rcv{C_{0..4}};\snd{N_0}$ };
\node[tag] at (7.7, -.5) { $\rcv{C_{0..4}};\snd{N_1}$ };
\node[tag] at (8.7, -1.5) { $\rcv{C_{0..4}};\snd{N_2}$ };
\node[tag] at (7.3, -2.5) { $\rcv{C_{0..4}};\snd{N_3}$ };
\node[tag] at (9.3, -3.5) { $\rcv{C_{0..4}};\snd{N_4}$ };

\end{tikzpicture}

\caption{The simplest happy path scenario: Replica 0 broadcasts proposal $P_1$; replicas $0, \dots, 4$
 receive $P_1$ and broadcast commit votes $C_i$, for $i =0, \dots, 4$; replicas $0, \dots, 4$
 receives~5 commit votes $C_0, \dots, C_4$, commits the block and send their new-view votes~$N_i$, for $i =0, \dots, 4$. Replica 5 is faulty and ignores all messages.}\label{fig:happy1}

\end{figure}
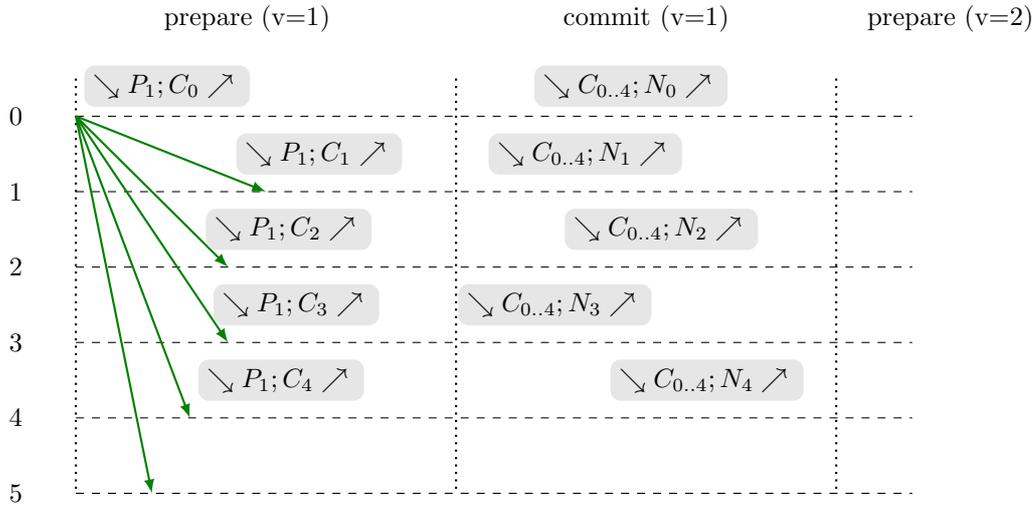 \begin{figure}

\begin{tikzpicture}[
    message/.style={->, >=latex, thick},
    phase/.style={dotted, thick},
    process/.style={draw, dashed},
    tag/.style={fill=gray!20!white,rounded corners=4pt},
]

\foreach \i/\name in {0/0, 1/1, 2/2, 3/3, 4/4, 5/5} {
    \node[anchor=west] (p\i) at (0, -\i) {\name};
    \draw[process] (1, -\i) -- (12, -\i);
}

\foreach \x/\name in {1/, 6/, 11/} {
    \draw[phase] (\x, 0.5) -- (\x, -5);
}
\foreach \x/\name in {3.25/prepare (v=1), 8.5/commit (v=1), 12.5/prepare (v=2)} {
    \node[above] at (\x, 1) {\name};
}

\node[tag] at (2.2, .4) { $\snd{P_1}; \snd{C_0}$ };

\draw[message,color=green!50!black] (1, 0) -- (3.5, -1);
\draw[message,color=green!50!black] (1, 0) -- (3, -2);
\draw[message,color=green!50!black] (1, 0) -- (3, -3);
\draw[message,color=green!50!black] (1, 0) -- (2.5, -4);
\draw[message,color=green!50!black] (1, 0) -- (2, -5);

\node[tag] at (4.4, -.5) { $\rcv{P_1};\snd{C_1}$ };
\node[tag] at (3.9, -1.5) { $\rcv{P_1};\snd{C_2}$ };
\node[tag] at (3.9, -2.5) { $\rcv{P_1};\snd{C_3}$ };
\node[tag] at (3.6, -3.5) { $\rcv{P_1};\snd{C_4}$ };

\node[tag] at (8.5, .5) { $\rcv{N_1}$ };
\node[tag] at (7, -.5) { $\rcv{C_{0..4}};\snd{N_1}$ };
\node[tag] at (8.5, -1.5) { $\rcv{N_1}$ };
\node[tag] at (9.5, -2.5) { $\rcv{N_1}$ };
\node[tag] at (9, -3.5) { $\rcv{N_1}$ };

\end{tikzpicture}

\caption{An optimized happy path scenario: Replica 0 broadcasts proposal $P_1$; replicas $0, \dots, 4$
 receive $P_1$ and broadcast commit votes $C_i$, for $i =0, \dots, 4$; replica 1
 receives~5 commit votes $C_0, \dots, C_4$, commits the block and sends its new-view vote~$N_1$; replicas $0, 2, 3, 4$
 receive~$N_1$, immediately commit the block, and proceed with the next view. Replica 5 is faulty and ignores all messages.}\label{fig:happy2}

\end{figure}
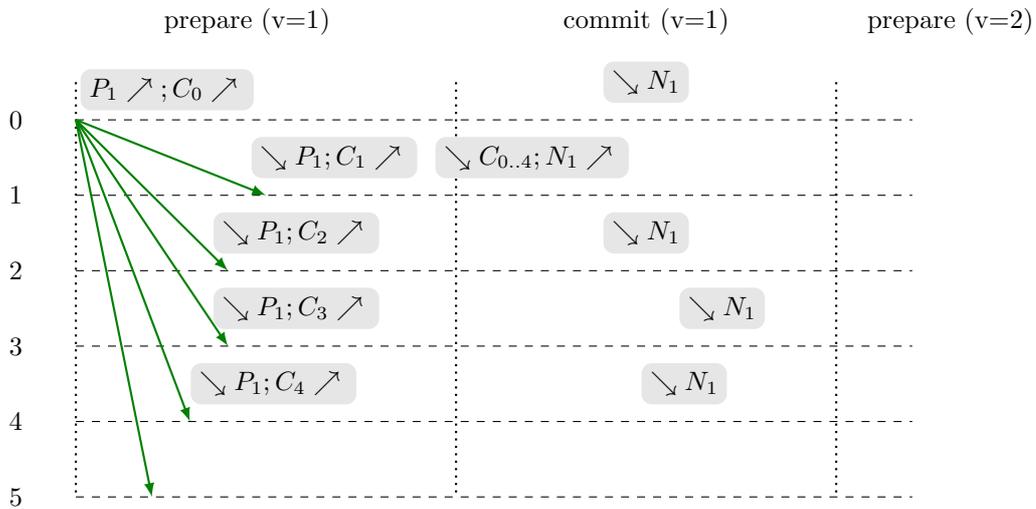 
\subsection{Handling Unhappy Paths: Timeouts and Re-proposals}

In adverse conditions, such as network delays or malicious leaders, the protocol must handle scenarios where progress
is impeded. Replicas wait for a predefined duration~$\timeout$ before timing out.

Upon timing out, replicas broadcast \textit{SignedTimeoutVote} message, which 
is a signature over the replica's \textit{TimeoutVote} message and the highest 
commit QC observed by the replica. The \textit{TimeoutVote} message includes 
the replica's current view, the commit vote with the highest view that this 
replica has signed so far, and the view number of the highest commit quorum 
certificate that this replica has observed.

When replicas collect $n-f$ timeout votes, they assemble a \textit{timeout certificate} denoted by \textit{TimeoutQC}. The new leader uses the TimeoutQC certificate to justify a new block proposal or block re-proposal in the next view. 

\begin{figure}

\begin{tikzpicture}[
    message/.style={->, >=latex, thick},
    phase/.style={dotted, thick},
    process/.style={draw, dashed},
    tag/.style={fill=gray!20!white,rounded corners=4pt},
]

\foreach \i/\name in {0/0, 1/1, 2/2, 3/3, 4/4, 5/5} {
    \node[anchor=west] (p\i) at (0, -\i) {\name};
    \draw[process] (1, -\i) -- (14.5, -\i);
}

\foreach \x/\name in {1/, 6/, 8.5/, 12/, 14.5/} {
    \draw[phase] (\x, 0.5) -- (\x, -5);
}
\foreach \x/\name in {3.75/prepare (v=1), 7.3/commit (v=1), 10.25/timeout (v=1), 13.25/prepare (v=2)} {
    \node[above] at (\x, 1) {\name};
}

\draw[message,color=green!50!black] (1, 0) -- (3.5, -1);
\draw[message,color=green!50!black] (1, 0) -- (3, -2);
\draw[message,color=green!50!black] (1, 0) -- (3, -3);
\draw[message,color=green!50!black] (1, 0) -- (2.5, -4);
\draw[message,color=green!50!black] (1, 0) -- (2, -5);

\node[tag] at (2, .3) { $\snd{P_1}; \snd{C_0}$ };
\node[tag] at (4.5, -.3) { $\rcv{P_1};\snd{C_1}$ };
\node[tag] at (3.4, -1.4) { $\rcv{P_1};\snd{C_2}$ };
\node[tag] at (3.2, -2.3) { $\rcv{P_1};\snd{C_3}$ };
\node[tag] at (4.4, -3.6) { $\rcv{P_1};\snd{C_4}$ };

\node[tag] at (9.6, .4) { $\textit{timeout}; \snd{T_0}$ };
\node[tag] at (9.7, -.6) { $\textit{timeout}; \snd{T_1}$ };
\node[tag] at (9.6, -1.6) { $\textit{timeout}; \snd{T_2}$ };
\node[tag] at (9.7, -2.6) { $\textit{timeout}; \snd{T_3}$ };
\node[tag] at (9.6, -3.6) { $\textit{timeout}; \snd{T_4}$ };

\node[tag] at (11.5,  0.4) { $\rcv{T_{0..4}}$ };
\node[tag] at (11.5, -0.6) { $\rcv{T_{0..4}}$ };
\node[tag] at (11.5, -1.6) { $\rcv{T_{0..4}}$ };
\node[tag] at (11.5, -2.6) { $\rcv{T_{0..4}}$ };
\node[tag] at (11.5, -3.6) { $\rcv{T_{0..4}}$ };
\draw[message,color=red!50!black] (11.5, 0) -- (12, 0);
\draw[message,color=red!50!black] (11.5, -1) -- (12, -1);
\draw[message,color=red!50!black] (11.5, -2) -- (12, -2);
\draw[message,color=red!50!black] (11.5, -3) -- (12, -3);
\draw[message,color=red!50!black] (11.5, -4) -- (12, -4);

\draw[message,color=green!50!black] (13.75, -3) -- (14.5, -0);
\draw[message,color=green!50!black] (13.75, -3) -- (14.5, -1);
\draw[message,color=green!50!black] (13.75, -3) -- (14.5, -2);
\draw[message,color=green!50!black] (13.75, -3) -- (14.5, -4);

\node[tag] at (13.5, -3.4) { $\snd{P_1}; \snd{C_5}$ };

\end{tikzpicture}

\caption{An unhappy path scenario: A correct proposer sends a proposal,
  commit votes are lost owing to asynchrony, all other
  correct replicas time out.}\label{fig:unhappy1}

\end{figure}
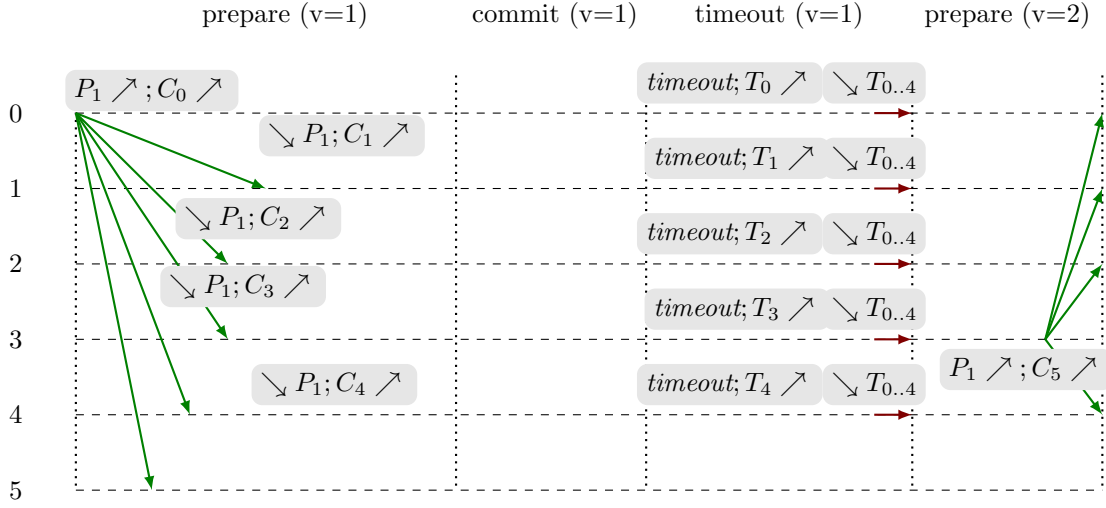 \begin{figure}

\begin{tikzpicture}[
    message/.style={->, >=latex, thick},
    phase/.style={dotted, thick},
    process/.style={draw, dashed},
    tag/.style={fill=gray!20!white,rounded corners=4pt},
    taggreen/.style={fill=green!20!white,rounded corners=4pt},
    tagred/.style={fill=red!20!white,rounded corners=4pt},
]

\foreach \i/\name in {0/0, 1/1, 2/2, 3/3, 4/4, 5/5} {
    \node[anchor=west] (p\i) at (0, -\i) {\name};
    \draw[process] (1, -\i) -- (14.5, -\i);
}

\foreach \x/\name in {1/, 5.1/, 7.3/, 12/, 14.5/} {
    \draw[phase] (\x, 0.5) -- (\x, -5);
}
\foreach \x/\name in {3/prepare (v=1), 6.1/commit (v=1), 10.25/timeout (v=1), 13.25/prepare (v=2)} {
    \node[above] at (\x, 1) {\name};
}

\draw[message,color=green!50!black] (1, -5) -- (2, 0);
\draw[message,color=green!50!black] (1, -5) -- (2.5, -1);
\draw[message,color=green!50!black] (1, -5) -- (3, -2);

\draw[message,color=orange!50!black] (1, -5) -- (3, -3);
\draw[message,color=orange!50!black] (1, -5) -- (2.5, -4);

\node[taggreen] at (2, .3) { $\snd{P_1}; \snd{C_0}$ };
\node[taggreen] at (4, -.4) { $\rcv{P_1};\snd{C_1}$ };
\node[taggreen] at (4, -1.6) { $\rcv{P_1};\snd{C_2}$ };

\node[tagred] at (4, -2.6) { $\rcv{P_2};\snd{C_3}$ };
\node[tagred] at (3.7, -3.6) { $\rcv{P_2};\snd{C_4}$ };

\node[taggreen] at (8.5, .4) { $\textit{timeout}; \snd{T_0}$ };
\node[taggreen] at (8.5, -.6) { $\textit{timeout}; \snd{T_1}$ };
\node[taggreen] at (8.5, -1.6) { $\textit{timeout}; \snd{T_2}$ };
\node[tagred] at (8.5, -2.6) { $\textit{timeout}; \snd{T_3}$ };
\node[tagred] at (8.5, -3.6) { $\textit{timeout}; \snd{T_4}$ };
\node[tagred] at (8.5, -4.3) { $\textit{timeout}; \snd{T_5}$ };

\node[tag] at (10, -4.9) { $\rcv{T_{0\dots5}}; \snd{N_5}$ };

\node[tag] at (11.4, .4) { $\rcv{N_5}$ };
\node[tag] at (11.4, -.6) { $\rcv{N_5}$ };
\node[tag] at (11.4, -1.6) { $\rcv{N_5}$ };
\node[tag] at (11.4, -2.6) { $\rcv{N_5}$ };
\node[tag] at (11.4, -3.6) { $\rcv{N_5}$ };

\draw[message,color=green!50!black] (13.75, -3) -- (14.5, -0);
\draw[message,color=green!50!black] (13.75, -3) -- (14.5, -1);
\draw[message,color=green!50!black] (13.75, -3) -- (14.5, -2);
\draw[message,color=green!50!black] (13.75, -3) -- (14.5, -4);

\node[tag] at (13.5, -3.4) { $\snd{P_3}; \snd{C_5}$ };

\end{tikzpicture}

\caption{An unhappy path scenario: A faulty replica partitions the correct replica by 
  sending two different proposals. After that, it sends a timeout quorum certificate that contains two 
  subquorums of equal size}\label{fig:unhappy2}

\end{figure}
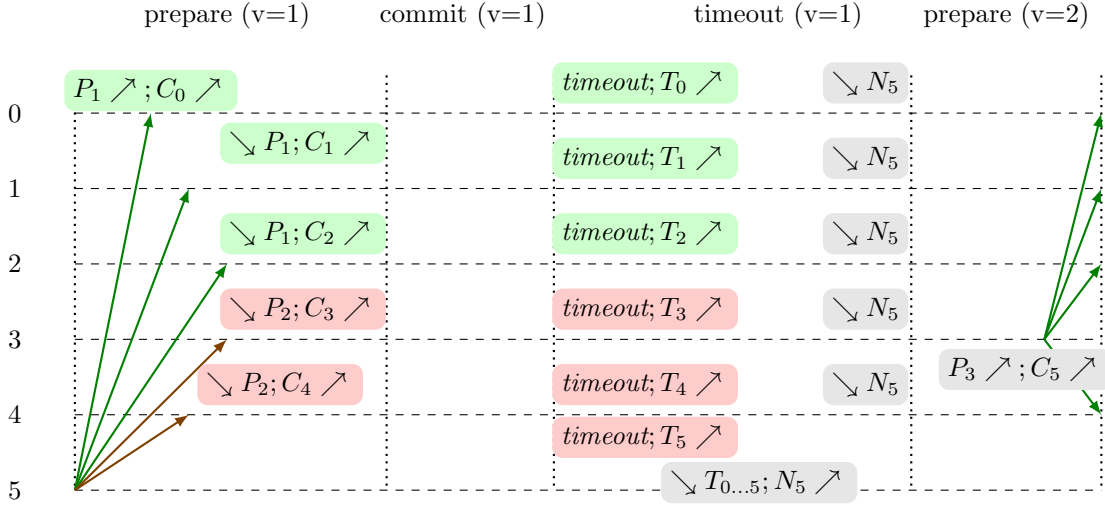 

In this case, it is crucial to determine whether any correct replicas have already committed to a previous proposal. This distinction affects how the protocol proceeds to ensure safety and avoid forks. There are two possible scenarios based on the collected \texttt{TimeoutVote} messages:

\begin{enumerate}

\item \textbf{A previous proposal may have been secretly finalized:} At least $n-3f$ replicas have voted for a previous proposal, as can be verified by comparing the replicas highest commit votes shared in their \textit{TimeoutVote} messages, and no honest replica has observed a commit quorum certificate for that proposal. This means that there might have been $n-f$ commit votes broadcast for that proposal (the observed $n-3f$ high commit votes plus $f$ honest replicas that haven't participated in the timeout quorum certificate plus $f$ malicious replicas that participated in the timeout quorum certificate and might be lying about their high commit votes), and so a replica might be hiding a commit quorum certificate for that proposal.

\item \textbf{No previous proposal has been secretly finalized:} Exactly the reverse condition of the previous scenario. Either at least $n-3f$ replicas have voted for the previous proposal but we also have the corresponding commit quorum certificate or less than $n-3f$ replicas have voted for the previous proposal. Either way we are able to prove that the previous proposal could not have been finalized.
\end{enumerate}

In the first case, the leader cannot conclusively determine whether some silent or malicious replicas might have committed the previous proposal.

To maintain safety, the protocol enforces the following rule: If the leader 
used a \texttt{TimeoutQC} as a justification for switching to the current view, 
and this \texttt{TimeoutQC} includes a subset of $n-3f$ votes (called a 
\textit{subquorum}) that favor the previous proposal, the leader must re-
propose the previous block. This strategy ensures that any potential 
commitments made by honest replicas are honored, preventing the possibility of 
conflicting commits at the same block number.

The rationale behind this rule is elaborated in the safety proofs provided in Section \ref{sec:proofs}. By requiring the leader to re-propose under these conditions, the protocol preserves the safety property, ensuring that all honest replicas eventually agree on the same sequence of committed blocks.

In the first case, the leader also uses a \texttt{TimeoutQC} as a justification for switching to the current view but is free to propose a new block. Figure~\ref{fig:unhappy1} shows this scenario. Replica~0 is a
correct leader, which proposes~$P_1$. All correct replicas receive the
proposal, but fail to receive the commit votes from the peers, due to
asynchrony. As a result, all correct replicas time out. As the correct 
replicas keep sending their latest messages
of each kind, the five correct replicas eventually receive timeout votes
and go to view~2. There, replica~3 becomes the leader, and it re-
proposes~$P_1$.

Figure~\ref{fig:unhappy2} illustrates the second case, which demonstrates
the power of Byzantine faults. The faulty replica~5 sends two conflicting
proposals~$P_1$ and $P_2$ to two subsets of the correct replicas. As a result,
three correct replicas time out and send
timeout votes supporting $P_1$, whereas two correct replicas time out and send timeout votes supporting $P_2$.
The faulty replica collects all these timeout votes, denoted with~$\rcv{T_{0\dots5}}$,
adds its own timeout vote supporting~$P_2$ and sends
a new-view message~$\snd{N_5}$ with a valid timeout quorum certificate, which has timeout votes that are
equally distributed for~$P_1$ and~$P_2$. In this case, the timeout quorum contains an evidence of having
two subquorums of $n - 3f$ votes. Hence, neither of the block hashes in those subquorums should be re-proposed.
In the next view, the new correct leader~3 discards the votes for~$P_1$ and~$P_2$ and proposes a completely new block, as there was no prevalent majority for either of the two proposals (if there are two subquorums in a timeout certificate, then it is mathematically impossible for either of the two proposals to have been committed).

\subsection{Expected Protocol Properties}\label{sec:properties}

As expected from a consensus algorithm, ChonkyBFT satisfies the following three properties:

\begin{itemize}
    \item \textbf{Safety.} No two correct replicas commit different blocks for the same block number.

    \item \textbf{Validity.} If a correct replica commits a block, this block must be proposed earlier and
      pass block verification.

    \item \textbf{Liveness.} For every block number~$n \in \mathbb{N}_0$, if a correct replica
      can propose a block for~$n$, under the assumption of partial synchrony,
      all correct replicas eventually commit a block for~$n$.
\end{itemize}

\section{Basic ChonkyBFT Specification}

This section specifies the basic consensus protocol. By "basic", we refer to a protocol where all replicas are treated equally, each contributing exactly one vote, irrespective of any stake-weighted considerations. We also exclude garbage collection and gossip protocol from the specification.

\subsection{System Model}
The following conditions hold:
\begin{enumerate}
    \item There is a set of $n$ replicas numbered $1, \dots, n$, of which at most $f$ are Byzantine faulty, such that $n \geq 5f+1$. All other replicas are correct.
    
    \item The network operates within the partial-synchrony model \cite{dwork1988consensus}, where communication among correct replicas may be fully asynchronous before the Global Stabilization Time (GST), unknown for all replicas, and becomes synchronous after GST.
    
    \item Network communication is implemented through reliable, authenticated, and point-to-point channels. A correct replica receives a message from another correct replica if and only if the latter explicitly sends that message to the former.
    
    \item An adversary controls communication channels within the partial-synchrony model and can corrupt up to $f$ replicas.

    \item A digital signature scheme is a tuple of three algorithms: \textit{KeyGen}, \textit{Sign}, \textit{Verify}. Each replica $i \in \{1, \dots, n\}$, possesses a public-private key pair $(priv_{i}, pub_{i})$ generated using the $KeyGen$ algorithm and is identified by its public key $pub_{i}$. Each replica knows all other replicas' public keys (identifiers). The $i$-th replica can use its private key to create a signature $sig = Sign(priv_{i}, m)$ on a message $m$.  Any replica can verify the signature $sig$ using the corresponding public key and the function $Verify(sig, m, pub_{i})$. 
    
    \item An aggregate signature scheme (e.g., BLS \cite{bls_signature}) is a tuple of four algorithms: \textit{KeyGen}, \textit{Sign}, \textit{Aggregate}, \textit{AggregateVerify}. The first two algorithms provide a cryptographic interface similar to traditional digital signatures. Each replica possesses a pair of keys $(priv_{i}, pub_{i})$ generated using the $KeyGen$ algorithm and shares its public key $pub_{i}$. The $i$-th replica can use its private key to create a signature $sig = \textit{Sign}(priv_{i}, m)$ on a message $m$. Each replica can aggregate a collection of signatures $sig_{1}, \dots, sig_{n}$ into a single signature $sig = \textit{Aggregate}(sig_{1}, \dots, sig_{n})$. Any replica can verify this aggregated signature on the messages $m_{1}, \dots, m_{n}$ using the public keys $pub_{1}, \dots, pub_{n}$ of the corresponding replicas and the function $\textit{AggregateVerify}((pub_{1}, \dots, pub_{n}), (m_{1}, \dots, m_{n}), sig)$.

    \item All cryptographic primitives used, such as digital signatures, aggregate signatures, and hash functions, are perfectly secure.
\end{enumerate}

\subsection{Notation}

With $\mathbb{N}_0$, we denote the set of numbers $0, 1, 2, \dots$.
We will also use the special value $\mathit{None}$ that can be assigned to any variable,
even if it is outside of all value domains.

The protocol operates in discrete views from the set of $\mathbb{N}_{0}$. Let \textit{view\_number} denote the \textit{view number}. 

A \textit{block} is an element of the set of all blocks. For our purposes, it suffices to
say that the set of all blocks is a subset of $\mathbb{N}_{0}$.
In practice, a block contains the relevant blockchain data, such as the list of signed transactions, the application state, the hash of the parent block, etc.

A correct replica cycles through three phases: \textit{Prepare}, \textit{Commit}, and 
\textit{Timeout}. With $\textit{Phase}$, we denote the set of these three phases.

We represent \textit{protocol messages} as tagged (named) tuples. We say that a message has a tag $T$ if it is a tuple tagged by $T$. More precisely,
when sending a message with a tag $T$ and fields $e_1, e_2, \dots, e_n$, we write
it as $\langle T, e_1, e_2, \dots, e_n \rangle$.
When receiving a message,
we write $\langle T, x_1, x_2, \dots, x_n \rangle$ to denote that the message must
be tagged with $T$ and its fields are deconstructed into variables
$x_1, x_2, \dots, x_n$ of the tag $T$. Finally, we use \underline{{ }{ }} to denote that the specific field may contain any value. 

For example, a tuple $\langle \text{CommitVote}, 2, 3, 4 \rangle$ corresponding to the message:
\[
\langle \text{CommitVote}, view, block\_number, block\_hash \rangle
\]

would indicate a commit vote for view 2, block number 3, and block hash 4. When receiving  
$\langle \text{CommitVote}, view, block\_number, block\_hash \rangle$, the placeholders
\textit{view}, \textit{block\_number}, and \textit{block\_hash} represent the values 2, 3, and 
4, respectively. 

Furthermore, receiving:
\[
\langle \text{CommitVote}, view, block\_number, \textunderscore \, \rangle
\]
indicates that the \textit{block\_hash} field is irrelevant.

Let a replica with number $i \in N$ and public key $pub_{i}$ be in the current view 
$view\_number$. The replica’s \textit{commit vote} for a block $b$, is represented as:

\[
\langle \text{CommitVote}, view\_number, block\_number, block\_hash \rangle,
\]

where $block\_number$ is a block number the replica is committing to, and $block\_hash = hash(b)$. 

A \textit{signed commit vote} of a commit vote $vote$ produced by the replica $i$ is defined as 
a tuple:

\[
\langle \text{SignedCommitVote}, vote, sig \rangle,
\]

where $sig = Sign(priv_{i}, vote)$.  That signed commit vote and its corresponding commit vote $vote$ are valid if $\textit{Verify}(vote, sig) = valid$.

We denote a \textit{commit quorum certificate} (QC) of a block $b$ with a
$\langle  \text{CommitQC}, vote, agg\_sig \rangle$, where $agg\_sig$ is an aggregate signature produced by aggregating the signatures of the same commit vote $\langle \text{CommitVote}, view, b.block\_number, b.block\_hash \rangle$.
Additionally, we assume that the identities of the signing replicas can be extracted from
a \textit{CommitQC}. We write $\textit{Signers}(qc)$ to denote the public keys
of the replicas whose signatures were aggregated into $qc.agg\_sig$. Further,
we say that a certificate $qc$ is \emph{valid}, if:

\begin{itemize}
    \item $|\textit{Signers}(qc)| \ge n - f$, and
    \item $\textit{AggregateVerify}(\textit{Signers}(qc), qc.vote, qc.agg\_sig)$
      returns true.
\end{itemize}

The commit vote with the maximum view number the replica has ever signed is called the \textit{high vote} and denoted with \textit{high\_vote}.

The commit QC with the maximum view number the replica has ever observed is called the \textit{high commit QC} and is denoted by \textit{high\_commit\_qc}. The view number of the highest commit quorum certificate that this replica has observed is called \textit{highest commit QC view number} and is denoted by \textit{high\_commit\_qc\_view}.

A tuple $\langle \text{TimeoutVote}, view, high\_vote, high\_commit\_qc\_view \rangle$ is called a timeout vote, and a signed timeout vote of a timeout vote $vote$ produced by the replica $i$ is defined as:

\[
\langle \text{SignedTimeoutVote}, vote, high\_commit\_qc, sig \rangle,
\]

where $sig = Sign(priv_{i}, vote)$ and $high\_commit\_qc$ is the commit QC with the highest view that this replica has observed. The signed timeout vote $st$ and its corresponding timeout vote $st.vote$ are valid if both of the following conditions hold:

\begin{enumerate}
    \item $\textit{Verify}(st.vote, st.sig) = valid$,
    \item $st.high\_commit\_qc$ is \texttt{None}, or is valid and $st.vote.high\_commit\_qc\_view = st.high\_commit\_qc$. 
\end{enumerate}

A \textit{timeout quorum certificate} of a view $v$ is a tuple:

\[
\langle \text{TimeoutQC}, votes, high\_commit\_qc, agg\_sig),
\]

where $votes$ is a set of timeout votes, $high\_commit\_qc$ is the commit quorum certificate with the highest view that all replicas in this QC have observed, and $agg\_sig$ is an aggregate signature produced by aggregating the signatures of the timeout votes $votes$ for this view $v$ from at least $n - f$ replicas.
Similar to the case of \texttt{CommitQC}, for a timeout quorum certification $qc$,
we denote with $\textit{Signers}(qc)$ the set of the public keys of the replicas
that have signed~$qc$.

A timeout quorum certificate $qc$ is \emph{valid}, if all of the following conditions hold
true:

\begin{enumerate}
    \item $|\textit{Signers}(qc)| \ge n - f$,
    \item $\textit{AggregateVerify}(\textit{Signers}(qc), qc.votes, qc.agg\_sig)$
      returns true,
    \item All votes have the same view:

    $\forall pub_1, pub_2 \in keys(qc.votes): qc.votes[pub_1].view = qc.votes[pub_2].view$

    \item If $qc.high\_qc$ is \texttt{None}, then all votes in $qc.votes$
      have the field $high\_commit\_qc\_view$ set to \texttt{None}.

    \item If $qc.high\_qc$ is different from \texttt{None}, then
      $qc.high\_qc.vote.view$
      is the highest view among the views $high\_commit\_qc\_view$
      of all votes in $qc.votes$. Moreover, $qc.high\_qc$ is valid.
\end{enumerate}

In the following, we use $view(qc)$ to denote the common view of the votes in
a valid timeout quorum certificate~$qc$. By slightly abusing the notation,
we use~$view(qc)$ to denote $qc.vote.view$ for a commit quorum certificate~$qc$.

A \textit{justification} is either a commit or a timeout QC. A justification is
valid if its underlying quorum certificate is valid. We use justifications and
quorum certificates interchangeably.

A \textit{committed block} is a pair $\langle b, qc \rangle$, for a block $b$
and a commit QC $qc$ that satisfy the following: $hash(b) = qc.vote.block\_hash$.

A \textit{proposal} is a tuple  $\langle \text{Proposal}, \textit{block}, 
\textit{justification}, \textit{sig} \rangle$. A proposal is called valid if its signature and 
justification are valid.

A \textit{new view} is a tuple $\langle \text{NewView}, \textit{justification}, sig \rangle$ 
denoted by $new\_view$, called valid if $new\_view.sig$ and $new\_view.\textit{justification}$
are valid.

A \textit{replica state} is a tuple that contains the following fields:

\begin{itemize}
    \item \texttt{view} is the replica's view, initially \texttt{None},
    \item \texttt{phase} is the replica's phase, initially \texttt{Prepare},
    \item \texttt{high\_vote} is the highest vote sent, initially \texttt{None},
    \item \texttt{high\_commit\_qc} is the highest observed commit QC, initially \texttt{None},
    \item \texttt{high\_timeout\_qc} is the highest observed timeout QC, initially \texttt{None},
    \item \texttt{committed\_blocks} is the list of committed blocks, initially the empty list \texttt{[]}.
\end{itemize}

We often refer to the state components of a replica $r$ using the dot notation. For example, \texttt{r.view} refers to the view of replica $r$ in the current state.

\subsection{Protocol Specification in Pseudocode}

We introduce the protocol as pseudocode in Listings~\ref{lst:defs}--\ref{lst:replica2}.
Listings~\ref{lst:defs}--\ref{lst:defs2} contain several key definitions related to quorum certificates.
These definitions do not depend on the protocol state.
Listing~\ref{lst:proposer} captures the proposer logic for a replica~$r$.
The replica acts as the proposer in a view~$v$ only when~$r$ is the leader of the view~$v$, or
$r = \textit{leader}(v)$. Listings~\ref{lst:replica}--\ref{lst:replica2} capture the logic of
a replica~$r$ in every view, independently of whether the replica acts as the leader in that view or not.

\subsubsection{Pseudocode Conventions}

We present the state transitions by replicas (acting as proposers and non-proposers)
in terms of event handlers of the following shape:

\begin{lstlisting}
upon <event> (when <condition>)*:
  <body>
\end{lstlisting}

An event is one of the five kinds:

\begin{enumerate}
  \item The event \texttt{start} that is triggered once when the proposer or replica code starts.
  
  \item The event \texttt{timeout} that is triggered once the associated timer expires.

  \item The event \texttt{*} that is triggered as soon as the handler's condition holds true.

  \item The event \texttt{<Msg, ...>} that is triggered as soon as the replica receives a
    message~\texttt{Msg} of the specified shape~\texttt{<Msg, ...>}. Note that every message is
    processed at most once.

  \item The event \texttt{\{ <Msg, ...> \} as $ms$} that is triggered as soon as the replica receives a
    set of messages~$ms$ that all have the specified shape~\texttt{<Msg, ...>} and satisfy
    the handler's condition. In this case, every message in the set~$ms$ is considered to be processed.
\end{enumerate}

The event conditions are written over the current replica state and, optionally, over the fields
of the received messages. The event handler's body contains pseudocode that further evaluates
the current replica state and computes its next state. The next state is computed from the current
state in terms of parallel assignments ${\ell_1, \dots, \ell_n \gets e_1, \dots, e_n}$, meaning that
every expression~$\ell_i$, e.g., a variable name or a record field, has the value~$e_i$ after the
assignment, for $1 \le i \le n$.

Replicas can only read and modify their own state (as proposers and non-proposers).
Replicas organize their distributed computation by broadcasting and receiving messages.

Every event handler is processed \textit{indivisibly}. Thus, no concurrency issues arise
from handlers' execution on the same machine. Besides the event conditions, written as
\texttt{when <condition>}, we introduce two additional statements:

\begin{lstlisting}
upon <event> (when <condition1>)*:
  ...
  assume(condition2)
  assert(condition3)
  ...
\end{lstlisting}

The meaning of \texttt{assume(condition2)} is that the respective event handler is only processed
when \texttt{condition2} holds true during the handler's execution. Otherwise, the event is discarded.
Hence, the semantics of \texttt{assume} is exactly the same as the semantics of \texttt{when}.
Technically, we could replace all \texttt{when}-conditions with \texttt{assume}, but we keep both for
presentation purposes. On the contrary, \texttt{assert(condition3)} specifies a protocol invariant,
which under no circumstances should be violated. Hence, assertions should be understood as additional
explanations about the expected code behavior.

\subsubsection{Pseudocode Explained}

\paragraph{Sending proposals.}
The computation of the consensus algorithm is organized in views. The correct view leader~\texttt{leader(v)}
sends their proposal in lines~\ref{line:propose-begin}--\ref{line:propose-end}. A correct proposer has to re-propose a previously proposed block,
if the following conditions hold true:

\begin{itemize}
  \item The leader for the current view has to justify his proposal with a
    timeout QC~$\tqc$ (i.e. the previous view timed out).

  \item The timeout QC~$\tqc$ contains a high vote, that is,
   it contains exactly one subquorum of replica votes for a given block.

  \item The timeout QC~$\tqc$ does not contain a high commit QC that
    corresponds to the high vote, that is, the high vote and the high commit 
    QC are issued for different block numbers.
\end{itemize}

In all other cases, the proposer is free to propose a new block. This logic is
captured in the definition of~\texttt{get\_implied\_block}, see
lines~\ref{line:implied-begin}--\ref{line:implied-end}.

\paragraph{Receiving proposals.} A correct replica receives a proposal in
lines~\ref{line:on-proposal-begin}--\ref{line:on-proposal-end}. Similar to the view proposer,
a replica uses the definition~\texttt{get\_implied\_block} to send its commit vote.
Importantly, the block contents are only received along with the proposal if a new block
is proposed. The rest of the computation is done over block hashes. Every correct replica processes
at most one proposal per view.

\paragraph{Receiving commit votes.} In lines~\ref{line:on-commit-begin}--\ref{line:on-commit-end},
a correct replica handles the event of receiving commit votes for the same view, block number, and block
hash from at least~$n - f$ replicas (correct or faulty). If this is the case, the replica constructs
a commit quorum certificate by aggregating the signatures over the votes. The replica sends
this certificate by broadcasting a \texttt{NewView} message and advances to the next view.

In this case, the replica finalizes a block for the current block number. It adds
the block to the list of committed blocks in lines~\ref{line:process-commit-start}--\ref{line:process-commit-end}.
It is also possible that at most~$f$ correct replicas catch up with a faster quorum by receiving~$n - f$
commit votes from the quorum participants without having received the contents of the finalized block
in a proposal. Even in this case, the late replicas store the block hash as finalized. However, they have to retrieve
the block contents with an additional synchronization mechanism called \textit{block fetcher}.

\paragraph{Sending and receiving timeout votes.} If a correct replica stays in the current view for the duration
of ~$\timeout$ time units, it broadcasts a timeout vote in
lines~\ref{line:on-send-timeout-start}--\ref{line:on-send-timeout-end}. By doing so, the replica messages
other replicas that no timely progress has been made and the replicas have to switch to the next view,
where another proposer will send their proposal.

When a replica receives timeout votes for the same view from at least~$n - f$ 
replicas (correct or faulty),
it constructs a timeout quorum certificate by aggregating the signatures over 
the votes. Importantly, the replicas find a commit quorum certificate for the 
highest view among the  quorum certificates sent
along with the timeout votes. This is essential for the proposer of the next
view to avoid re-proposing a finalized block. The replica sends
the timeout quorum certificate by broadcasting a \texttt{NewView} message and 
advances to the next view.

\paragraph{Receiving new view messages.} Finally, a replica may receive a quorum certificate, either
a commit QC or a timeout QC, from another replica. By doing so, the replica avoids receiving~$n - f$
commit or timeout votes and constructing its own quorum certificates. Additionally, if it receives a commit
QC for the current block number, it may immediately finalize a block. In both cases, the replica, if delayed,
will immediately jump to the next view of the received quorum certificate.

\paragraph{Bootstrapping logic.} Replicas start in view 0, and thus, they do not have a quorum certificate
for the previous view. To work around this problem, the replicas immediately send timeout votes upon starting in
lines~\ref{line:on-send-timeout-start}--\ref{line:on-send-timeout-end}. As a result, under synchrony
in view~0, correct replicas are expected to produce a timeout QC and proceed to view~1. It is possible to use other bootstrapping methods, if desired.

\begin{lstlisting}[float, columns=fullflexible, caption={Key definitions over quorum certificates}, label=lst:defs]
// Return the highest commit quorum certificate
def max(qc1, qc2):
  if qc1 is None:
    return qc2
  else if qc2 is None:
    return qc1
  else if qc1.vote.view >= qc2.vote.view:
    return qc1
  else:
    return qc2

// the view of a commit or timeout QC
def view(qc):
  if qc is CommitQC:
    return qc.vote.view
  if qc is TimeoutQC:
    // get the common view of the timeout votes (a valid QC has only one such view)
    return qc.votes[pk].view for some pk in keys(qc.votes)

// Get a high vote for a subquorum of >= n - 3f timeout votes in a TimeoutQC,
// if there is exactly one such subquorum. Otherwise, return None.
def high_vote(qc):
  // collect all available high votes in qc.votes
  let HV = (*@$ \{ pk \in keys(qc.votes) : qc.votes[pk].high\_vote \} \setminus\ \text{None} $@*)
  if (*@$\exists v_1 \in HV: |\{ pk \in keys(qc.votes): qc.votes[pk].high\_vote = v_1 \}| \ge n - 3f$@*): (*@\label{line:q1}@*)
    // there is subquorum of timeout votes that contain (*@$v_1$@*) as high vote
    if (*@$\exists v_2 \in HV \setminus \{ v_1 \}: |\{ pk \in keys(qc.votes): qc.votes[pk].high\_vote = v_2 \}| \ge n - 3f$@*):(*@\label{line:q2}@*)
      // ...and a subquorum of timeout votes that contain (*@$v_2 \ne v_1$@*) as high vote
      return None
    else:
      return v1
  else:
    return None
\end{lstlisting}

\begin{lstlisting}[float, columns=fullflexible, firstnumber=last,
  caption={Key definitions over quorum certificates (continued)}, label=lst:defs2]
// Correct replicas coordinate on how they either advance the block number,
// or re-propose the block that they have collectively voted upon in the past.
// This is captured with a quorum certificate for the view change (justification):
// Both the view leader and the non-leaders use this function to decide,
// whether to proceed with the next block number, or re-propose the hash
// of the previously voted upon block.
// Returns the tuple <block_number, block_hash>.
def get_implied_block(qc): (*@\label{line:implied-begin}@*)
  if qc is CommitQC:
    // The previous proposal was finalized, so we can propose a new block.
    return (qc.vote.block_number + 1, None) (*@\label{line:implied-next-block}@*)
  if qc is TimeoutQC:
    let hv = high_vote(qc)
    if hv is not None and (qc.high_commit_qc is None
        or hv.block_number > qc.high_commit_qc.vote.block_number):
      // There was a proposal in the last view that might have been finalized.
      // We need to repropose it.
      return (hv.block_number, hv.block_hash) (*@\label{line:implied-reproposal}@*)
    else:
      // Either the previous proposal was finalized or we know for certain
      // that it couldn't have been finalized. Either way, we can propose
      // a new block.
      if qc.high_commit_qc is None:
        return (0, None)
      else:
        return (qc.high_commit_qc.vote.block_number + 1, None)  (*@\label{line:implied-end}\label{line:implied-next-block2}@*)
\end{lstlisting}

\begin{lstlisting}[float, columns=fullflexible, caption={Proposer logic}, firstnumber=last, label=lst:proposer]
var cur_view (*@$\gets$@*) 0 // an internal variable to keep track of the view changes

upon start: // of proposer for replica (*@$r$@*)
  cur_view (*@$\gets$@*) r.view
  
upon * when r.view > cur_view and (*@$r$@*) is leader(r.view): (*@\label{line:propose-begin}@*)
  cur_view (*@$\gets$@*) r.view
  qc (*@$\gets$@*) create_justification_qc()
  (block_number, block_hash) (*@$\gets$@*) get_implied_block(qc)
  assume(replica.committed_blocks has a block for all n (*@$\leq$@*) block_number) (*@\label{line:proposer-has-blocks}@*)
  let b = None if block_hash is None, else create_proposal(block_number)
  broadcast <Proposal, b, qc> (*@\label{line:propose-end}@*)

def create_justification_qc():
  assert(r.high_commit_qc is not None or r.high_timeout_qc is not None)
  if r.high_commit_qc is not None and (r.high_timeout_qc is None
      or view(r.high_commit_qc) (*@$\geq$@*) view(r.high_timeout_qc)):
    return r.high_commit_qc
  else:
    return r.high_timeout_qc
\end{lstlisting}

\begin{lstlisting}[float,columns=fullflexible, caption={Replica logic}, firstnumber=last, label=lst:replica]
var cur_view (*@$\gets$@*) 0 // an internal variable to keep track of the view changes
var timer // an internal timer

upon start: // of replica (*@$r$@*)
  timer.start((*@$\timeout$@*))
  cur_view (*@$\gets$@*) r.view

upon <Proposal, block, qc>(*@\label{line:on_proposal}\label{line:on-proposal-begin}@*)
    when (r.phase = Prepare and r.view = view(qc) + 1) or view(qc) + 1 > r.view
    when signer is leader(view(qc) + 1):
  let (block_number, block_hash) = get_implied_block(qc)
  assume(replica.committed_blocks has a block for all n (*@$\leq$@*) block_number)(*@\label{line:on-proposal-has-blocks}@*)
  // Check if this is a reproposal or not, and do the necessary checks
  assume((block_hash is None) = (block is not None))
  if block_hash is None:
    assume(r.verify_block(block_number, block)) (*@\label{line:on-proposal-verify-block}@*)
    cache block for (block_number, hash(block))
  let bh = hash(block) if block_hash is None, else block_hash
  let vote = <CommitVote, view(qc) + 1, block_number, bh> (*@\label{line:create-commit}@*)
  r.phase, r.view, r.high_vote (*@$\gets$@*) Commit, view(qc) + 1, vote (*@\label{line:set-high-vote}@*)
  if qc is CommitQC:
    r.process_commit_qc(qc)
  if qc is TimeoutQC:
    r.process_commit_qc(qc.high_commit_qc)
    r.high_timeout_qc (*@$\gets$@*) max(qc, r.high_timeout_qc)
  broadcast <SignedCommitVote, vote> (*@\label{line:send-commit}\label{line:on-proposal-end}@*)

upon { <SignedCommitVote, <CommitVote, view, block_number, block_hash>, _> } as (*@$vs$@*) (*@\label{line:on-commit-begin}@*)
     when (*@$|vs| \ge n - f$@*) and view (*@$\geq$@*) r.view:
  let agg_sig = Aggregate(vs) // signatures from (*@$vs$@*)
  let vote = <CommitVote, view, block_number, block_hash>
  let qc = <CommitQC, vote, agg_sig> (*@\label{line:commit-qc}@*)
  r.process_commit_qc(qc)
  start_new_view(view + 1) (*@\label{line:on-commit-end}@*)

upon (start or timeout) when r.phase (*@$\ne$@*) Timeout: (*@\label{line:on-send-timeout-start}@*)
  // Update our state so that we can no longer vote commit in this view
  r.phase (*@$\gets$@*) Timeout
  broadcast <TimeoutVote, r.view, r.high_vote, r.high_commit_qc> (*@\label{line:send-timeout}\label{line:on-send-timeout-end}@*)

upon { <SignedTimeoutVote, <TimeoutVote, view, _, _>, _> } as (*@$vs$@*) (*@\label{line:on-timeout-start}@*)
     when (*@$|vs| \ge n - f$@*) and view (*@$\geq$@*) r.view:
  let agg_sig = Aggregate(vs) // signatures from (*@$vs$@*)
  let votes = Map[ public_key(v) (*@$\rightarrow$@*) v.vote for v in (*@$vs$@*) ]
  // choose one of the commit QCs with the highest view
  let high_commit_qc =
    v.high_commit_qc for some v in (*@$vs$@*):
      for w in vs: v.high_commit_qc.vote.view (*@$\geq$@*) w.high_commit_qc.vote.view
  let qc = <TimeoutQC, votes, agg_sig, high_commit_qc>
  r.process_commit_qc(high_commit_qc)
  r.high_timeout_qc (*@$\gets$@*) max(qc, r.high_timeout_qc)
  start_new_view(view + 1) (*@\label{line:on-timeout-end}@*)
\end{lstlisting}
    
\begin{lstlisting}[float,columns=fullflexible, caption={Replica logic (continued)}, firstnumber=last, label=lst:replica2]
upon <NewView, qc> when view(qc) + 1 (*@$\geq$@*) r.view (*@\label{line:on-new-view-start}@*)
  if qc is CommitQC:
    r.process_commit_qc(qc)
  if qc is TimeoutQC:
    r.process_commit_qc(qc.high_commit_qc)
    r.high_timeout_qc (*@$\gets$@*) max(qc, r.high_timeout_qc)
  if view(qc) + 1 > r.view:
    start_new_view(view(qc) + 1) (*@\label{line:on-new-view-end}@*)

upon * when r.view > cur_view:
  // If the view has increased before the timeout, we reset the timer
  cur_view (*@$\gets$@*) r.view
  timer.reset() (*@\label{line:reset-timer}@*)

def start_new_view(view): (*@\label{line:start-new-view-start}@*)
  r.phase, r.view (*@$\gets$@*) Prepare, view
  broadcast <NewView, create_justification_qc()> (*@\label{line:send-new-view}@*)(*@\label{line:start-new-view-end}@*)

def process_commit_qc(qc): (*@\label{line:process-commit-start}@*)
  if qc is not None:
    r.high_commit_qc (*@$\gets$@*) max(qc, r.high_commit_qc)
    if r has a cached block (*@$b$@*) for (qc.vote.block_number, qc.vote.block_hash):
      if qc.vote.block_number = r.committed_blocks.length:
        r.committed_blocks (*@$\gets$@*) r.committed_blocks.append((*@$b$@*))
        commit((*@$b$@*)) (*@\label{line:commit}\label{line:process-commit-end}@*)
\end{lstlisting}

\section{Correctness Proofs} \label{sec:proofs}

\subsection{Proofs of safety}

Before showing the safety of the protocol, we prove several protocol invariants on quorum certificates.
We start with a lemma about set cardinalities, which is needed to reason about quorums and subquorums:

\begin{lemma}\label{lem:quorum-intersections}
    Given two sets~$A$ and~$B$, from the inequalities $|A| \ge \alpha$,
    $|B| \ge \beta$, $|A \cup B| \le \gamma$,
    it follows that $|A \cap B| \ge \alpha + \beta - \gamma$.
\end{lemma}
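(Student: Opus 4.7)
The plan is to apply the standard inclusion-exclusion identity for two finite sets, namely $|A \cup B| = |A| + |B| - |A \cap B|$, and then substitute the three given bounds. Rearranging the identity gives $|A \cap B| = |A| + |B| - |A \cup B|$, so the hypotheses $|A| \ge \alpha$ and $|B| \ge \beta$ push the right-hand side up, while $|A \cup B| \le \gamma$ (subtracted) also pushes it up, yielding the claimed lower bound $|A \cap B| \ge \alpha + \beta - \gamma$.

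Concretely, I would first note that $A$ and $B$ are implicitly finite (otherwise the cardinality inequalities are meaningless in the intended application, where $A$ and $B$ will be subsets of the $n$ replicas). Then I would state the inclusion-exclusion identity as a single equation, rearrange to isolate $|A \cap B|$, and chain the three inequalities:
\[
|A \cap B| \;=\; |A| + |B| - |A \cup B| \;\ge\; \alpha + \beta - \gamma.
\]
No case analysis is required, and the bound is allowed to be vacuous (i.e.\ non-positive) when $\alpha + \beta \le \gamma$, which is fine because the conclusion $|A \cap B| \ge \alpha + \beta - \gamma$ is then trivially true from $|A \cap B| \ge 0$.

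There is no real obstacle here; the lemma is essentially a one-line consequence of inclusion-exclusion. The only thing to be careful about is writing the argument in a form that is robust to the degenerate case $\alpha + \beta - \gamma \le 0$, which I would handle by simply observing that inclusion-exclusion holds as an equality regardless, and any further lower bound by $0$ is subsumed. The lemma will be used later with $\alpha = \beta = n - f$ (two quorums) and $\gamma = n$ (the universe of replicas) to extract an intersection of size at least $n - 2f$, which is the usual quorum-intersection argument in BFT proofs.
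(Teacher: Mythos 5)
Your proof is correct and is essentially the same argument as the paper's: the paper simply derives the inclusion--exclusion identity inline (by decomposing $A$, $B$, and $A \cup B$ into the pairwise-disjoint pieces $A \setminus C$, $B \setminus C$, and $C = A \cap B$ and cancelling), whereas you invoke the identity directly and chain the three inequalities. Both yield the bound in one step, and your remark about the vacuous case $\alpha + \beta \le \gamma$ is a harmless extra observation.
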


\begin{proof}
  Let $C$ be $A \cap B$. Rewrite the lemma assumptions as follows:

  \begin{align}
      |(A \setminus C) \cup C| &\ge \alpha\label{eq:AC}
      \\
      |(B \setminus C) \cup C| &\ge \beta\label{eq:BC}
      \\
      |(A \setminus C) \cup (B \setminus C) \cup C| &\le \gamma\label{eq:ABC}
  \end{align}

  By summing up the inequalities~(\ref{eq:AC}) and~(\ref{eq:BC})
  and subtracting the inequality~(\ref{eq:ABC}), we get:

  \begin{equation}
    |(A \setminus C) \cup C| + |(B \setminus C) \cup C| - |(A \setminus C) \cup (B \setminus C) \cup C| \ge \alpha + \beta - \gamma \label{eq:ACBCABC}
  \end{equation}

  Now, notice that $C$, $A \setminus C$, and $B \setminus C$ are pairwise 
  disjoint, since~$C = A \cap B$.
  Hence, we have $|(A \setminus C) \cup C| = |A \setminus C| + |C|$,
  $|(B \setminus C) \cup C| = |B \setminus C| + |C|$, and $|(A \setminus C) 
  \cup (B \setminus C) \cup C| = |A \setminus C| + |B \setminus C| + |C|$.
  By applying those equalities to inequality~(\ref{eq:ACBCABC}), we arrive at:

  \begin{equation}
      |A \setminus C| + |C| + |B \setminus C| + |C| - |A \setminus C| - |B \setminus C| - |C| \ge \alpha + \beta - \gamma\label{eq:C}
  \end{equation}

  By eliminating equal terms in~(\ref{eq:C}), we obtain $|C| \ge \alpha + \beta - \gamma$.
  Thus, $|A \cap B| \ge \alpha + \beta - \gamma$.
\end{proof}

In particular, Lemma~\ref{lem:quorum-intersections} gives us the following corollary for sets of replica
identities:

\begin{corollary}\label{cor:correct}
  For every set $X \subseteq \Corr\,\cup \Faulty$, if $|X| \ge \alpha$, then
  $|X \cap \Corr\,| \ge \alpha - f$.
\end{corollary}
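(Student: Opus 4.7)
The plan is to obtain the corollary as a one-line instantiation of Lemma~\ref{lem:quorum-intersections}. I would set $A = X$ and $B = \Corr$, so that the intersection $A \cap B$ is exactly $X \cap \Corr$, and then just read off the three cardinality bounds required as hypotheses of the lemma.

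First, I would record the bound on $|A|$: this is given to us as $|X| \ge \alpha$, so we take the lemma's parameter to be $\alpha$ unchanged. Next, I would use the system model assumption that at most $f$ replicas are faulty, i.e., $|\Faulty| \le f$, together with $\Corr \cup \Faulty$ partitioning the $n$ replicas, to conclude $|\Corr| \ge n - f$; this plays the role of $\beta = n - f$ in the lemma. Finally, I would observe that since $X \subseteq \Corr \cup \Faulty$, we have $X \cup \Corr \subseteq \Corr \cup \Faulty$, and the right-hand side has size exactly $n$, so $|X \cup \Corr| \le n$; this is $\gamma = n$.

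Plugging these three values into Lemma~\ref{lem:quorum-intersections} yields $|X \cap \Corr| \ge \alpha + (n - f) - n = \alpha - f$, which is exactly the conclusion we want.

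There is essentially no obstacle here: the corollary is a pure specialization of the lemma to a concrete universe. The only thing to be slightly careful about is citing the right facts from the system model (that $|\Corr| + |\Faulty| = n$ and $|\Faulty| \le f$) so that the bounds $|\Corr| \ge n - f$ and $|X \cup \Corr| \le n$ are justified rather than asserted; everything else is arithmetic that the lemma itself has already done.
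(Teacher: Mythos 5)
Your proposal is correct and matches the paper's own proof essentially verbatim: both instantiate Lemma~\ref{lem:quorum-intersections} with $A = X$, $B = \Corr$, $\beta = n - f$, $\gamma = n$, using $|\Corr| \ge n-f$ and $|X \cup \Corr| \le n$. Your extra care in deriving these two bounds from the system model is a minor elaboration, not a different route.
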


\begin{proof}
    Since $|\Corr| = n - f$ and $|\Corr\,\cup \Faulty\,| = n$, we have $|\Corr\,| \ge n - f$ and $|X \cup \Corr\,| \le n$. Apply Lemma~\ref{lem:quorum-intersections} with~$A = X$, $B = \Corr$, $\alpha = \alpha$, $\beta = n - f$, $\gamma = n$.
\end{proof}

Further, we apply Lemma~\ref{lem:quorum-intersections} and Corollary~\ref{cor:correct}, to reason
about intersections of quorum signers:

\begin{corollary}\label{cor:commit-intersections}
  Let $\cqc_1$, $\cqc_2$ be two valid commit quorum certificates. Then,
  $|\,\signers{\cqc_1}\, \cap\, \signers{\cqc_2}\,\cap\, \Corr{}\,| \ge n - 3f$.
\end{corollary}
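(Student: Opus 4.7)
The plan is to chain the two preceding results: first apply Lemma~\ref{lem:quorum-intersections} to bound the intersection of the two signer sets, then apply Corollary~\ref{cor:correct} to the resulting intersection to extract the correct signers.

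First, I would unpack the validity hypothesis. Since $\cqc_1$ and $\cqc_2$ are valid commit quorum certificates, by definition $|\signers{\cqc_1}| \ge n - f$ and $|\signers{\cqc_2}| \ge n - f$. Moreover, both signer sets are subsets of the $n$ replicas, so $|\signers{\cqc_1} \cup \signers{\cqc_2}| \le n$.

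Next, I would invoke Lemma~\ref{lem:quorum-intersections} with $A = \signers{\cqc_1}$, $B = \signers{\cqc_2}$, $\alpha = \beta = n - f$, and $\gamma = n$. This yields
\[
  |\signers{\cqc_1} \cap \signers{\cqc_2}| \ge (n - f) + (n - f) - n = n - 2f.
\]
Now let $X = \signers{\cqc_1} \cap \signers{\cqc_2}$, which is a subset of $\Corr \cup \Faulty$ of size at least $n - 2f$. Applying Corollary~\ref{cor:correct} to $X$ with $\alpha = n - 2f$ gives $|X \cap \Corr| \ge (n - 2f) - f = n - 3f$, which is exactly the claim.

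I do not expect any real obstacle here: the result is a direct two-step composition of the already-established lemma and corollary, and the only thing to be careful about is making sure that the validity condition on commit QCs actually delivers the $n - f$ lower bound on signers (which it does, by the first bullet in the definition of a valid commit QC).
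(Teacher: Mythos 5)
Your proposal is correct and follows exactly the same two-step route as the paper: Lemma~\ref{lem:quorum-intersections} with $\alpha = \beta = n-f$, $\gamma = n$ to get the $n-2f$ bound on the intersection, followed by Corollary~\ref{cor:correct} to cut down to the correct replicas. Nothing is missing.
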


\begin{proof}
  Since $\cqc_1$ and $\cqc_2$ are valid commit quorum certificates, each of them must be
  signed by at least $n - f$ replicas, that is, $|\signers{\cqc_1}| \ge n - f$ and
  $|\,\signers{\cqc_2}\,| \ge n - f$. Further, $|\signers{\cqc_1}\,\cup\,\signers{\cqc_2}| \leq n$,
  as $n$ is the total number of replicas. We apply Lemma~\ref{lem:quorum-intersections}
  with~$A = \signers{\cqc_1}$, $B = \signers{\cqc_2}$, $\alpha = \beta = n - f$, $\gamma = n$, and 
  immediately obtain that:
\begin{equation}
    |\,\signers{\cqc_1}\, \cap\, \signers{\cqc_2}\,| \ge n - 2f \label{eq:cqc1-2}
  \end{equation}

  By applying Corollary~\ref{cor:correct} to~(\ref{eq:cqc1-2}),
  we get $|\,(\signers{\cqc_1}\, \cap\, \signers{\cqc_2})\,\cap\, \Corr{}\,| \ge n - 3f$.
\end{proof}

By replacing the commit quorum certificate~$\cqc_2$ with a timeout quorum certificate $\tqc_2$ in
Corollary~\ref{cor:commit-intersections} and its proof, we immediately obtain a similar result
for the intersections of commit QCs and timeout QCs:

\begin{corollary}\label{cor:commit-timeout-intersections}
  Let $\cqc_1$ and $\tqc_2$ be a valid commit quorum certificate and a valid timeout
  quorum certificate, respectively. Then,
  $|\,\signers{\cqc_1}\, \cap\, \signers{\tqc_2}\,\cap\, \Corr{}\,| \ge n - 3f$.
\end{corollary}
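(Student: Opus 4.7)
The plan is essentially to mirror the proof of Corollary~\ref{cor:commit-intersections} verbatim, since the only property of $\cqc_2$ that proof actually uses is the signer lower bound $|\signers{\cqc_2}| \ge n - f$, and the validity conditions on a timeout quorum certificate give exactly the same bound for $\tqc_2$ via condition~1 in the definition of a valid $\text{TimeoutQC}$.

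Concretely, the first step is to extract the two cardinality bounds: $|\signers{\cqc_1}| \ge n - f$ from the validity of the commit QC, and $|\signers{\tqc_2}| \ge n - f$ from the validity of the timeout QC. The union is bounded above by the total number of replicas, so $|\signers{\cqc_1} \cup \signers{\tqc_2}| \le n$.

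Then I would apply Lemma~\ref{lem:quorum-intersections} with $A = \signers{\cqc_1}$, $B = \signers{\tqc_2}$, $\alpha = \beta = n - f$, and $\gamma = n$, immediately yielding
\[
  |\signers{\cqc_1} \cap \signers{\tqc_2}| \ge 2(n - f) - n = n - 2f.
\]
A final application of Corollary~\ref{cor:correct} to the set $X = \signers{\cqc_1} \cap \signers{\tqc_2}$ with $\alpha = n - 2f$ then gives the desired bound $|\signers{\cqc_1} \cap \signers{\tqc_2} \cap \Corr| \ge n - 3f$.

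There is no real obstacle here; the only point worth checking is that $\signers{\tqc_2}$ is meaningful in the same way as for commit QCs (noted in the paper immediately after the definition of timeout QCs) and that the hypothesis $X \subseteq \Corr \cup \Faulty$ of Corollary~\ref{cor:correct} is satisfied, which is immediate since the signers of any valid QC are drawn from the $n$ replicas. The proof carries over with essentially no change from Corollary~\ref{cor:commit-intersections}.
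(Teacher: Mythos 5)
Your proposal is correct and matches the paper's argument exactly: the paper likewise obtains this corollary by repeating the proof of Corollary~\ref{cor:commit-intersections} with $\tqc_2$ in place of $\cqc_2$, using the $n-f$ signer bound from the timeout QC validity conditions, Lemma~\ref{lem:quorum-intersections}, and Corollary~\ref{cor:correct}. No gaps; you simply spell out the steps the paper leaves implicit.
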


Now we are in a position to prove the first result about the commit quorum certificates
that are produced in the same view:

\begin{lemma}\label{lem:cqc-one-vote}
  Consider an arbitrary execution of the protocol. Let $\cqc_1$, $\cqc_2$ be two valid commit
  quorum certificates that are issued by replicas (correct or faulty). The following holds true
  for $\cqc_1$, $\cqc_2$: If $\cqc_1.vote.view = \cqc_2.vote.view$, then $\cqc_1.vote = \cqc_2.vote$.
\end{lemma}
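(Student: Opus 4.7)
The plan is to reduce the equality of the two QC votes to the fact that some correct replica signed both, and then argue that a correct replica can sign at most one commit vote per view. Concretely, I would proceed as follows.

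First I would apply Corollary~\ref{cor:commit-intersections} to $\cqc_1$ and $\cqc_2$, obtaining that
\[
|\,\signers{\cqc_1}\,\cap\,\signers{\cqc_2}\,\cap\,\Corr\,| \ge n - 3f.
\]
Since the system model assumes $n \ge 5f+1$, we have $n - 3f \ge 2f+1 \ge 1$, so the intersection contains at least one correct replica $r$. Because the QCs are valid, the aggregate-signature verification in the definition of validity implies that $r$ has signed the commit vote $\cqc_1.vote$ (as a member of $\signers{\cqc_1}$) and the commit vote $\cqc_2.vote$ (as a member of $\signers{\cqc_2}$). Under the assumption that the cryptographic primitives are perfectly secure, this signing must have occurred at runtime.

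Next, I would establish the single-vote-per-view invariant: a correct replica signs at most one \texttt{CommitVote} in any given view. The only place in Listings~\ref{lst:replica}--\ref{lst:replica2} where a correct replica produces a signed commit vote is the \texttt{<Proposal, block, qc>} handler (lines~\ref{line:on-proposal-begin}--\ref{line:on-proposal-end}), which emits the vote $\langle \text{CommitVote}, view(qc)+1, block\_number, bh \rangle$ after setting $r.phase \gets \text{Commit}$ and $r.view \gets view(qc)+1$. For the handler to fire again with the same $view(qc)+1 = v$, its guard demands either $r.phase = \text{Prepare} \wedge r.view = v$ (false, since $r.phase = \text{Commit}$) or $v > r.view$ (false, since $r.view = v$). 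The phase can only be reset to \text{Prepare} via \texttt{start\_new\_view}, which strictly increases the view. Hence, once $r$ signs a commit vote in view $v$, it cannot sign another one in the same view.

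Finally I would combine the two observations: the common correct signer $r$ signed both $\cqc_1.vote$ and $\cqc_2.vote$, both of which have view $\cqc_1.vote.view = \cqc_2.vote.view$ by hypothesis. By the single-vote-per-view invariant, these must be literally the same vote, so $\cqc_1.vote = \cqc_2.vote$, completing the proof. The main obstacle is the second step: the quorum-intersection bookkeeping is routine once Corollary~\ref{cor:commit-intersections} is in place, but carefully justifying from the pseudocode's guard and assignments that no correct replica ever produces two distinct commit votes in one view requires a precise inspection of the only signing site and of how \texttt{phase} and \texttt{view} evolve.
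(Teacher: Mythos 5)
Your proposal is correct and follows essentially the same route as the paper's proof: Corollary~\ref{cor:commit-intersections} plus $n \ge 5f+1$ yields a common correct signer, and an inspection of the guard and the phase/view updates in the proposal handler (lines~\ref{line:on-proposal-begin}--\ref{line:on-proposal-end}) shows that a correct replica signs at most one commit vote per view. Your treatment of the second step is somewhat more explicit than the paper's (you check both disjuncts of the guard and note that \texttt{start\_new\_view} strictly increases the view), but the argument is the same.
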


\begin{proof}
  We apply Corollary~\ref{cor:commit-intersections} to~$\cqc$ and~$\cqc'$,
  to obtain $|\,(\signers{\cqc_1}\, \cap\, \signers{\cqc_2})\,\cap\, \Corr{}\,| \ge n - 3f$.
  By the protocol assumption, $n > 5f$, and thus $n - 3f > 0$. We conclude that:
  
  \begin{equation}
    (\signers{\cqc_1} \cap \signers{\cqc_2}) \cap \Corr{} \ne \emptyset
        \label{eq:cqc-intersection-non-empty}
  \end{equation}

  From~(\ref{eq:cqc-intersection-non-empty}), we know that a correct replica~$r$ has
  signed $\cqc_1.vote$ and $\cqc_2.vote$. A commit vote is sent in line~\ref{line:send-commit} of
  a proposal message handler. Since this handler is only activated when $r.phase = \textit{Prepare}$
  and it changes $r.phase$ to $\textit{Commit}$, and the views of  $\cqc_1.vote$ and $\cqc_2.vote$
  coincide (the lemma assumption), we immediately arrive at the conclusion that~$r$
  sent only one commit vote in view $\cqc_1.vote.view = \cqc_2.vote.view$.
  Hence, $\cqc_1.vote$ and $\cqc_2.vote$ are identical.
\end{proof}

For the following lemma, we need an additional technical definition. For two commit votes~$cv_1$
and~$cv_2$, we say that these commit votes are \textit{equal-modulo-view}, or, $cv_1 \approx cv_2$,
if the following conditions hold true:
\begin{align*}
    cv_1.block\_number &= cv_2.block\_number \\
    cv_1.block\_hash &= cv_2.block\_hash
\end{align*}

Lemma~\ref{lem:cqc-one-vote} deals with the case of two commit quorums for the same view.
In general, we have to deal with the case of having a commit quorum in one view and timeout quorums
in later views. Essentially, we show that if there was a commit quorum on a block,
then all correct replicas should re-propose this block, as captured in line~\ref{line:implied-reproposal}
of~\texttt{get\_implied\_block}. Lemma~\ref{lem:timeout-chain} below captures this property.

First, we formulate a family of invariants over protocol executions and views.
We define system
invariants~$\textsc{Inv}(x_1, \dots, x_m)$ as statements over replica states, quorum certificates,
views, etc. Importantly, our invariants are parameterized by a view~$v \ge 0$.
We need this parameter to inductively prove the invariants, by expanding our view horizon.

\begin{definition}[Invariant $\invOne{v, s, \cqc}$]\label{def:inv1}
    Given a view~$v \ge 0$, a valid commit quorum certificate~$\cqc$,
    and a state~$s$ of a correct replica~$r \in \signers{\cqc}$ such that $s.view \le v$
    and $\cqc.vote.view \le v$, we define the invariant $\invOne{v, s, \cqc}$ as:
\begin{align}
    &s.view < \cqc.vote.view, \text{ or} \label{eq:view-under} \\
    &s.view = \cqc.vote.view \text{ and } s.phase = \text{Prepare}, \text{ or} \label{eq:view-eq} \\
    &s.high\_vote \eqmod \cqc.vote, \text{ or} \label{eq:hv-eq} \\
    &s.high\_vote.block\_number > \cqc.vote.block\_number \label{eq:block-no-above}
  \end{align}   
\end{definition}

\begin{definition}[Invariant $\invTwo{v, \cqc, \tqc}$]\label{def:inv2}
  Given a view~$v \ge 0$, a valid commit quorum certificate~$\cqc$, and a valid timeout quorum
  certificate~$\tqc$ such that $view(\tqc) \le v$. Further, consider the following conditions:
\begin{align}
    \tqc.high\_commit\_qc.vote.block\_number =&\ \cqc.vote.block\_number - 1 \label{eq:next_block} \\
    view(\tqc) \geq&\ \cqc.vote.view \label{eq:cqc_view_below}\\
    high\_vote(\tqc) &\eqmod \cqc.vote \label{eq:hv-eqmod}
  \end{align}

  We define the invariant $\invTwo{v, \cqc, \tqc}$ as follows:
  If the conditions~(\ref{eq:next_block}) and~(\ref{eq:cqc_view_below}) hold true,
  then the condition~(\ref{eq:hv-eqmod}) holds true.
\end{definition}

By assuming that the invariants~$\invOneName$ and~$\invTwoName$ hold true for all views
below a view~$v \ge 0$, we show that these invariants
hold true for the view~$v$, as well. This is done in Lemmas~\ref{lem:inv1}--\ref{lem:timeout-chain}.
These lemmas constitute the inductive step of proving that the invariants hold true
for all views~$v > 0$. The base of the induction is trivial: The protocol starts in view~0,
where it immediately produces timeout votes, so no commit quorum certificates are produced
in view~0.

\begin{lemma}\label{lem:inv1}
  Consider an arbitrary execution of the protocol and a view~$v \ge 0$. Assume that:
  
  \begin{itemize}
    \item The invariant $\invOne{u, s_u, \cqc_u}$ holds for every
      view~$u < v$, commit quorum certificate~$\cqc_u$,
      and every state~$s_u$ of every correct replica~$r_u \in \signers{\cqc}$ such that
      $s_u.view \le u$ and $\cqc_u.vote.view \le u$ in the execution.
  
    \item The invariant~$\invTwo{u, \cqc_u, \tqc_u}$ holds for every
      view~$u < v$, every commit quorum certificate~$\cqc_u$, and timeout quorum
      certificate~$\tqc_u$ such that $view(\tqc_u) \le u$.
  \end{itemize}
  
  Then, $\invOne{v, s, \cqc}$ holds for the view~$v$,
  every commit quorum certificate~$\cqc$,
  and every state~$s$ of every correct replica~$r \in \signers{\cqc}$
  such that~$s.view \le v$ and $\cqc.vote.view \le v$.
\end{lemma}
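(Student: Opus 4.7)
The plan is to fix the commit QC $\cqc$ with $\cqc.vote.view = v'$ and a correct replica $r \in \signers{\cqc}$, and to induct along $r$'s state trajectory, showing $\invOne{v, s, \cqc}$ at every reachable state $s$ of $r$ with $s.view \le v$. Because on-proposal (the sole producer of commit votes) only fires in phase Prepare and sets $r.view$ to $view(qc) + 1$, and since $r \in \signers{\cqc}$, the trajectory of $r$ must pass through view $v'$ in phase Prepare and emit $\cqc.vote$ there. Every state earlier than this signing is already covered by (\ref{eq:view-under}) (if $s.view < v'$) or (\ref{eq:view-eq}) (if $s.view = v'$), and immediately after the signing we have $r.high\_vote = \cqc.vote$, discharging (\ref{eq:hv-eq}).

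For the inductive step I would walk through each handler. Everything other than on-proposal leaves $r.high\_vote$ unchanged: start\_new\_view, on-send-timeout, process\_commit\_qc, and the bookkeeping parts of on-commit, on-timeout and on-new-view only advance $r.view$ or refresh $r.high\_commit\_qc$ / $r.high\_timeout\_qc$, so whenever (\ref{eq:hv-eq}) or (\ref{eq:block-no-above}) is already in force it continues to hold. The single delicate transition is on-proposal firing in a view $u > v'$, where $r$ signs a fresh $cv = \langle \text{CommitVote}, u, b, h \rangle$ with $(b, h)$ derived from $\text{get\_implied\_block}(qc)$ for the proposal's justification $qc$ of view $u - 1 \ge v'$; it then suffices to show $cv \eqmod \cqc.vote$ or $b > \cqc.vote.block\_number$, which I split by the shape of $qc$.

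If $qc$ is a commit QC, then either $view(qc) = v'$ and Lemma~\ref{lem:cqc-one-vote} yields $qc.vote = \cqc.vote$ so $b = \cqc.vote.block\_number + 1$, or $v' < view(qc) < v$ and Corollary~\ref{cor:commit-intersections} provides a correct $r' \in \signers{qc} \cap \signers{\cqc}$ whose state just after signing $qc.vote$ fits the inductive $\invOneName$ hypothesis at view $view(qc)$ and forces $qc.vote.block\_number \ge \cqc.vote.block\_number$, again giving $b > \cqc.vote.block\_number$. If $qc$ is a timeout QC on the reproposal branch of $\text{get\_implied\_block}$, so $(b, h) = (hv.block\_number, hv.block\_hash)$ for $hv = \text{high\_vote}(qc)$, then Lemma~\ref{lem:quorum-intersections} extracts a correct $r'' \in \signers{\cqc}$ from the $\ge n - 3f$ subquorum carrying $hv$; the inductive $\invOneName$ applied at view $view(qc)$ to the state of $r''$ at its timeout vote (view $\ge v'$, phase $\ne$ Prepare) forces $hv \eqmod \cqc.vote$ or $hv.block\_number > \cqc.vote.block\_number$, and both options discharge the obligation for $cv$.

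The hard part is the non-reproposal branch for a timeout QC justification, where $(b, h) = (qc.high\_commit\_qc.vote.block\_number + 1, \text{hash of a fresh block})$ — or $(0, \text{None})$ when $qc.high\_commit\_qc$ is None — and $h$ cannot be assumed equal to $\cqc.vote.block\_hash$; here the argument must establish $qc.high\_commit\_qc.vote.block\_number \ge \cqc.vote.block\_number$ outright. The inductive $\invTwoName$ applied at view $view(qc) < v$ rules out $qc.high\_commit\_qc.vote.block\_number = \cqc.vote.block\_number - 1$, since that equality would force $\text{high\_vote}(qc) \eqmod \cqc.vote$ and push us back into the reproposal branch. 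When $qc.high\_commit\_qc.vote.view \ge v'$, the inductive $\invOneName$ on any correct replica in $\signers{qc.high\_commit\_qc} \cap \signers{\cqc}$ closes the gap. The residual situation — $qc.high\_commit\_qc$ equal to None, or its view lying strictly below $v'$ — is the step I expect to be the main obstacle: it appears to require combining the monotonicity of process\_commit\_qc along every correct signer of $\cqc$ that also signs $qc$ with an unrolling of the chain of justifications that produced $\cqc$'s proposal in view $v'$, so as to force $qc.high\_commit\_qc.vote.block\_number \ge \cqc.vote.block\_number - 1$ and then appeal once more to $\invTwoName$ to exclude equality.
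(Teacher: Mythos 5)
Your overall plan---induct along the replica's trajectory, observe that only the proposal handler touches \texttt{high\_vote}, and split on whether the proposal's justification is a commit QC or a timeout QC---is the same case analysis the paper performs, and your treatments of the commit-QC case and of the reproposal branch are sound (the latter even bypasses $\invTwoName$ by intersecting the $n-3f$ subquorum with $\signers{\cqc}$ and $\Corr$, which works since $n-5f\ge 1$). However, the case you flag as ``the main obstacle'' is a genuine gap, and it is exactly where the paper's proof does its real work. The way to close it is not to unroll the chain of justifications behind $\cqc$, but to trichotomize on $\tqc.high\_commit\_qc.vote.block\_number$ versus $\cqc.vote.block\_number - 1$ \emph{before} asking which branch of \texttt{get\_implied\_block} is taken. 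If it is strictly smaller (including the \texttt{None} case), Corollary~\ref{cor:commit-timeout-intersections} yields a correct replica in $\signers{\cqc}\cap\signers{\tqc}\cap\Corr$; to vote for block number $\cqc.vote.block\_number$ that replica must already hold a commit QC for block number at least $\cqc.vote.block\_number-1$ (line~\ref{line:on-proposal-has-blocks}), it reports its high commit QC in its timeout vote, and the validity/aggregation rule for timeout QCs selects the highest such QC---so this case is impossible outright, independently of the view of $\tqc.high\_commit\_qc$. If it equals $\cqc.vote.block\_number-1$, then $\invTwo{view(\tqc),\cqc,\tqc}$ forces $high\_vote(\tqc)\eqmod\cqc.vote$, whose block number then strictly exceeds $\tqc.high\_commit\_qc.vote.block\_number$, so the reproposal branch is necessarily taken and your ``non-reproposal with absent or low high commit QC'' situation cannot arise. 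If it is strictly greater, both branches of \texttt{get\_implied\_block} yield a block number exceeding $\cqc.vote.block\_number$, giving~(\ref{eq:block-no-above}).

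In short, the residual case you could not discharge is vacuous, but seeing this requires exploiting the justification's own $high\_commit\_qc$ field---via the quorum intersection with $\signers{\cqc}$ and the rule that a valid timeout QC carries the highest commit QC reported by its signers---rather than the history of the proposal that produced $\cqc$. Without that step the induction does not close: in the non-reproposal branch the freshly proposed block hash gives you no route to~(\ref{eq:hv-eq}), so~(\ref{eq:block-no-above}) is the only exit, and you have not established the lower bound on $\tqc.high\_commit\_qc.vote.block\_number$ needed to reach it.
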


\begin{proof}
  Fix a commit quorum certificate~$\cqc$ such that $\cqc.vote.view \le v$,
  a correct replica~$r \in \signers{\cqc}$, and a local state~$s$
  of the replica~$r$ such that $s.view \le v$. We do the proof by case
  distinction on~$s.view$ and~$\cqc.vote.view$. The order of cases is important: When we consider
  a case~$i$, we assume that the above cases $1, \dots, i - 1$ do not apply.
  In each case, we have to show that one of the
  conditions~(\ref{eq:view-under})--(\ref{eq:block-no-above}) is satisfied, and thus, the
  invariant~$\invOne{v, s, \cqc}$ holds true.

  \underline{Case} $s.view < \cqc.vote.view$. The condition~(\ref{eq:view-under}) immediately holds true.

  \underline{Case} $s.view = \cqc.vote.view$ and $s.phase = \text{Prepare}$.
  The condition~(\ref{eq:view-eq}) holds true.

  \underline{Case} $s.view = \cqc.vote.view$ \text{ and } $s.phase \ne \text{Prepare}$.
  Since~$r \in \signers{\cqc}$, replica~$r$ has sent a commit vote~$\cqc.vote$ in
  line~\ref{line:send-commit}. When doing so, replica~$r$ sets its~$high\_vote$ to~$\cqc.vote$
  in line~\ref{line:set-high-vote}.
  There is no other way for~$r$ to change~$high\_vote$ in the same view. Hence,
  condition~(\ref{eq:hv-eq}) holds true, that is, $s.high\_vote \eqmod \cqc.vote$.

  \underline{Case} $s.view > \cqc.vote.view$ and $s.phase = \text{Prepare}$.
  In this case, replica~$r$ updated its~$high\_vote$ in line~\ref{line:set-high-vote} in the
  view~$s.view - 1$ or lower. Let~$s'$ be the state of~$r$ right after that last update
  of~$high\_vote$. Since $s.view - 1 < v$, we apply~$\invOne{v - 1, s', \cqc}$,
  which holds true by the lemma assumption,
  as~$v - 1 < v$, $s'.view \le s.view - 1 \le v - 1$, and~$\cqc.vote.view < s.view \le v$.

  \underline{Case} $s.view > \cqc.vote.view$ and $s.phase \ne \text{Prepare}$.
  In this case, replica~$r$ sent its commit vote in view~$s.view$ when processing
  a proposal in lines~\ref{line:on-proposal-begin}--\ref{line:on-proposal-end}.
  There are two kinds of proposals to consider:

    \begin{enumerate}
      \item The proposal carries a timeout quorum certificate~$\tqc$ for the view~$s.view - 1$
        as a proposal justification. Hence, we immediately apply~$\invTwo{s.view - 1, \cqc, \tqc}$,
        which holds true by the lemma assumption, since~$s.view - 1 \le v - 1$ and~$view(\tqc) = s.view - 1 \le v - 1$. We immediately conclude that inequality~(\ref{eq:cqc_view_below}) holds true.
        As $\invTwo{s.view - 1, \cqc, \tqc}$ is an implication, we have that either
        equality~(\ref{eq:next_block}) does not hold true, or constraint (\ref{eq:hv-eqmod}) holds true.
        Hence, we consider several cases on $\tqc.high\_commit\_qc.vote.block\_number$:

        \begin{enumerate}
            \item  \underline{Case}~(a):
                $tqc.high\_commit\_qc.vote.block\_number = \cqc.vote.block\_number - 1$, that is,
                equality~(\ref{eq:next_block}) holds true. Hence, by $\invTwo{s.view - 1, \cqc, \tqc}$,
                we conclude that constraint (\ref{eq:hv-eqmod}) holds true, that is,
                $s.high\_vote \eqmod \cqc.vote$, which gives us condition~(\ref{eq:hv-eq}).
                Hence, the invariant $\invOne{v, s, \cqc}$ is satisfied.
                
            \item  \underline{Case}~(b):
                $tqc.high\_commit\_qc.vote.block\_number > \cqc.vote.block\_number - 1$, that is,
                equality~(\ref{eq:next_block}) is violated\footnote{
                  While this case may seem absurd, a faulty replica can construct such a timeout quorum
                  certificate by collecting the commit votes and timeout votes of correct replicas.
                  We have reproduced this case with the model checker.
                }.
                Hence, the next block number is increased in line~\ref{line:implied-next-block2},
                and this block number is used in line~\ref{line:set-high-vote}.
                After that, we have $s.high\_vote.block\_number > \cqc.vote.block\_number$.
                Hence, inequality (\ref{eq:block-no-above}) holds true, and, thus,
                the invariant $\invOne{v, s, \cqc}$ is satisfied.
                
            \item  \underline{Case}~(c):
                $tqc.high\_commit\_qc.vote.block\_number < \cqc.vote.block\_number - 1$, that is,
                equality~(\ref{eq:next_block}) is violated. By
                Corollary~\ref{cor:commit-timeout-intersections}, there is at least
                one replica in $\signers{\cqc} \cap \signers{\tqc} \cap \Corr$. Since this
                replica has sent its commit vote that is included in~$\cqc$, it has its $high\_commit\_qc.vote.block\_number \ge \cqc.vote.block\_number - 1$; Otherwise,
                it would not be able to vote for~$\cqc$. This $high\_commit\_qc$ is sent
                by the replica as part of its timeout vote that is included in~$\tqc$.
                As a result, we have
                $tqc.high\_commit\_qc.vote.block\_number \ge \cqc.vote.block\_number - 1$.
                We arrive at a contradiction with the assumption of this case, so this case
                is impossible.
                
        \end{enumerate}

      \item The proposal carries a commit quorum certificate~$\cqc'$ for the view~$s.view - 1$
        as a proposal justification. Again, there are three cases to consider:

        \begin{enumerate}
            \item \underline{Case}~(a): $\cqc'.vote.block\_number > \cqc.vote.block\_number$.
            In this case $s.high\_vote = \cqc'.vote$, and condition~(\ref{eq:block-no-above})
            immediately holds true.

            \item \underline{Case}~(b): $\cqc'.vote.block\_number < \cqc.vote.block\_number$.
              By applying Corollary~\ref{cor:commit-intersections} to~$\cqc$ and~$\cqc'$,
              we obtain $|\signers{\cqc} \cap \signers{\cqc'} \cap \Corr| \ge n - 3f$.
              By the protocol assumptions, $n \ge 5f + 1$. Hence, the set
              $\signers{\cqc} \cap \signers{\cqc'} \cap \Corr$
              is non-empty. Let $r'$ be a correct replica in the intersection. Replica~$r'$
              has voted on~$\cqc$ in view~$\cqc.vote.view \le s.view - 1$ and on~$\cqc'$ in
              view~$r.view - 1$. This and the assumption of
              $\cqc.vote.block\_number > \cqc'.vote.block\_number$
              contradict the logic of~\texttt{get\_implied\_block} in
              lines~(\ref{line:implied-begin})--(\ref{line:implied-end}), which is never
              decreasing block numbers.

            \item \underline{Case}~(c): $\cqc'.vote.block\_number = \cqc.vote.block\_number$.
              Similar to Case~(b), we conclude that there is at least one correct replica~$r'$
              in the intersection of $\signers{\cqc} \cap \signers{\cqc'} \cap \Corr$.
              Consider the state~$s'$ of replica~$r'$ right after~$r'$ sent its commit vote
              for~$\cqc'$ and updated its~$high\_vote$ in view~$s.view - 1$. Since $s.view - 1 < v$,
              $s'.view = s.view - 1 < v$, and~$\cqc'.vote.view = s.view - 1 < v$,
              we apply the lemma assumption to conclude that~$\invOne{s.view - 1, s', \cqc'}$
              holds true. Notice that the conditions~(\ref{eq:view-under}),
              (\ref{eq:view-eq}), and~(\ref{eq:block-no-above}) are not applicable in this case.
              Hence, we have condition~(\ref{eq:hv-eq}), and thus, $s'.high\_vote \eqmod \cqc'.vote$
              holds true. By the same reasoning, $\invOne{r.view - 1, s', \cqc}$ holds true. The
              conditions~(\ref{eq:view-under}) and (\ref{eq:view-eq}) are not applicable, and
              condition~(\ref{eq:block-no-above}) does not hold true, as
              $\cqc'.vote.block\_number = \cqc.vote.block\_number$. Hence, condition~(\ref{eq:hv-eq}) holds true for~$\cqc$, and thus $s'.high\_vote \eqmod \cqc.vote$. Hence, $\cqc'.vote = \cqc.vote$.
              As~$r$ sets its $high\_vote$ to $\cqc'.vote$ when processing~$\cqc'$ in line~\ref{line:set-high-vote}, we obtains $s.high\_vote \eqmod \cqc.vote$.
        \end{enumerate}
    \end{enumerate}

  This finishes the proof, as we have considered all the cases.
\end{proof}

\begin{lemma}\label{lem:timeout-chain}
  Consider an arbitrary execution of the protocol and a view~$v \ge 0$. Assume that:
  
  \begin{itemize}
    \item The invariant $\invOne{u, s_u, \cqc_u}$ holds for every
      view~$u < v$, commit quorum certificate~$\cqc_u$,
      and every state~$s_u$ of every correct replica~$r_u \in \signers{\cqc}$ such that
      $s_u.view \le u$ and $\cqc_u.vote.view \le u$ in the execution.
  
    \item The invariant~$\invTwo{u, \cqc_u, \tqc_u}$ holds for every
      view~$u < v$, every commit quorum certificate~$\cqc_u$, and timeout quorum
      certificate~$\tqc_u$ such that $view(\tqc_u) \le u$.
  \end{itemize}
  
  Then, $\invTwo{v, \cqc, \tqc}$ holds for the view~$v$, every commit quorum
  certificate~$\cqc$, and timeout quorum certificate~$\tqc$ such that $view(\tqc) \le v$.
\end{lemma}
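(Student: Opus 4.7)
The plan is to mirror the strategy of Lemma~\ref{lem:inv1}: fix a view $v$, a commit QC $\cqc$, and a timeout QC $\tqc$ with $view(\tqc) \le v$; assume preconditions (\ref{eq:next_block}) and (\ref{eq:cqc_view_below}); and aim to establish (\ref{eq:hv-eqmod}). The first step is to apply Corollary~\ref{cor:commit-timeout-intersections} to $\cqc$ and $\tqc$ to obtain a non-empty intersection $R^* := \signers{\cqc} \cap \signers{\tqc} \cap \Corr$ with $|R^*| \ge n - 3f > 0$. The overall strategy is to argue that every $r' \in R^*$ contributed to $\tqc$ a timeout vote whose $high\_vote$ agrees with $\cqc.vote$ modulo view; the subquorum $R^*$ then witnesses the $\ge n - 3f$ supporting subquorum required by the definition of $high\_vote$, and the absence of a rival $\ge n - 3f$ subquorum for a different value is handled by extracting its correct-replica core via Corollary~\ref{cor:correct} and re-applying the per-replica argument.

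For a fixed $r' \in R^*$, let $s'$ be the state of $r'$ at the instant its timeout vote is broadcast (lines~\ref{line:on-send-timeout-start}--\ref{line:on-send-timeout-end}); then $s'.view = view(\tqc)$, $s'.phase = \text{Timeout}$, and the vote ships $s'.high\_vote$ along with $s'.high\_commit\_qc$. I apply $\invOneName$ to $r'$ and $\cqc$: for $view(\tqc) < v$ via the lemma hypothesis, and for $view(\tqc) = v$ via Lemma~\ref{lem:inv1}, whose preconditions coincide with ours. Disjunct (\ref{eq:view-under}) is excluded by $s'.view = view(\tqc) \ge \cqc.vote.view$ using precondition (\ref{eq:cqc_view_below}), and disjunct (\ref{eq:view-eq}) by $s'.phase = \text{Timeout} \ne \text{Prepare}$. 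What remains is either $s'.high\_vote \eqmod \cqc.vote$, the desired per-replica conclusion, or the hazardous $s'.high\_vote.block\_number > \cqc.vote.block\_number$.

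The main obstacle will be ruling out this last case. Under that hypothesis, I would trace back to the most recent proposal handler in which $r'$ set $high\_vote$ and do case analysis on its justification. For a CommitQC justification at block number $\ge \cqc.vote.block\_number$, the concurrent call to $process\_commit\_qc$ raises $r'.high\_commit\_qc.vote.block\_number$ to at least $\cqc.vote.block\_number$. For a TimeoutQC justification $\tqc'$ at an earlier view $u < v$ (necessarily a re-proposal landing at a block number above $\cqc.vote.block\_number$), I apply the induction hypothesis $\invTwo{u, \cqc, \tqc'}$, potentially chasing the chain of earlier re-proposing TimeoutQCs, to rule out this scenario. Combining with the block-number monotonicity across valid commit QCs — obtained by applying $\invOneName$ (from the hypothesis or Lemma~\ref{lem:inv1}) to two commit QCs sharing a correct signer via Corollary~\ref{cor:commit-intersections} — and with the TimeoutQC validity rule (the $high\_commit\_qc$ selected has maximal view across all vote fields), these ingredients force $\tqc.high\_commit\_qc.vote.block\_number \ge \cqc.vote.block\_number$, directly contradicting precondition (\ref{eq:next_block}) which pins it at $\cqc.vote.block\_number - 1$. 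The hardest part will be closing this chain cleanly, in particular handling the nested re-proposal cases without circular dependence on the current inductive step.
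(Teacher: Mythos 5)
Your overall route is a genuine variant of the paper's: where the paper splits into the cases $view(\tqc) = \cqc.vote.view$ (argued directly from the phase discipline) and $view(\tqc) > \cqc.vote.view$ (argued through the proposal handler in view~$v$ and the induction hypothesis $\invTwo{v-1,\cqc,\tqc_{v-1}}$), you apply $\invOneName$ uniformly at the state in which each replica of $R^*$ broadcasts its timeout vote, importing Lemma~\ref{lem:inv1}'s conclusion at view~$v$ itself. That import is legitimate (the two lemmas share hypotheses, so there is no circularity), and the residual obligation --- excluding the disjunct $s'.high\_vote.block\_number > \cqc.vote.block\_number$ by forcing $\tqc.high\_commit\_qc.vote.block\_number \ge \cqc.vote.block\_number$ against precondition~(\ref{eq:next_block}) --- matches the step the paper itself performs when it rules out condition~(\ref{eq:block-no-above}) in its case~(ii). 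The nested re-proposal chase you flag as the hardest part is real (a re-proposal can raise $high\_vote.block\_number$ without raising $high\_commit\_qc$), but the paper's own treatment of that step is equally terse, so I do not count it against you beyond noting that it is not closed.

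The concrete gap is in your handling of the rival subquorum. You propose to take a hypothetical second subquorum $Q'$ of $\ge n-3f$ votes for a different value, extract its correct core via Corollary~\ref{cor:correct}, and ``re-apply the per-replica argument.'' But your per-replica argument rests entirely on $\invOne{\cdot, s', \cqc}$, and Definition~\ref{def:inv1} is only stated for correct replicas in $\signers{\cqc}$; a correct member of $Q'$ need not have signed $\cqc$, and for such a replica the invariant tells you nothing about its $high\_vote$. As written, the step fails. Two repairs are available: (a) intersect further --- from $|Q'| \ge n-3f$, $|\signers{\cqc}| \ge n-f$, and Lemma~\ref{lem:quorum-intersections} you get $|Q' \cap \signers{\cqc}| \ge n-4f$, and then Corollary~\ref{cor:correct} gives $|Q' \cap \signers{\cqc} \cap \Corr| \ge n-5f \ge 1$, producing a correct signer of $\cqc$ inside $Q'$ to which your per-replica argument does apply; or (b) count as the paper does --- since every correct signer of $\cqc$ that contributes a vote to $\tqc$ carries a high vote $\eqmod \cqc.vote$, $Q'$ is contained in $\Faulty \cup (\Corr \setminus \signers{\cqc})$, whose size is at most $2f < n-3f$. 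Either way, this is precisely the point where the full strength of $n \ge 5f+1$ is consumed, so the omission is not cosmetic.
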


\begin{proof}
    Fix a commit quorum certificate~$\cqc$ and a timeout quorum certificate~$\tqc$ such
    that $view(\tqc) \le v$.   
    If the conditions~(\ref{eq:next_block})--(\ref{eq:cqc_view_below}) do not hold true,
    the invariant $\invTwo{v, \cqc, \tqc}$ is trivially satisfied.
    Hence, we assume that these conditions are satisfied.

    Further, we notice that if $v > view(\tqc)$, our proof goal immediately follows
    from the invariant $\invTwo{v - 1, \cqc, \tqc}$, which holds by the lemma assumption.
    Hence, we assume that~$v = view(\tqc)$.

    Since $\cqc$ and $\tqc$ are valid quorum certificates, we apply
    Corollary~\ref{cor:commit-timeout-intersections} and then Corollary~\ref{cor:correct},
    to obtain the lower bound on their intersection with the set of the correct replicas
    $Q = (\signers{\cqc}\, \cap\, \signers{\tqc})\,\cap\, \Corr{}$:
\begin{equation}\label{eq:lemma2-subquorum}
        |Q| \ge n - 3f
    \end{equation}

    We proceed by considering two major cases.

\textit{(i) \underline{Case $v = view(\tqc) = \cqc.vote.view$.}}
  Since the correct replicas in~$Q$ send both commit votes (line~\ref{line:send-commit}) and
  timeout votes (line~\ref{line:send-timeout}) in view~$v$, they go through the phases \textit{Prepare},
  \textit{Commit}, and \textit{Timeout}. Fix an arbitrary replica~$r \in Q$. Replica~$r$ sends its commit
  vote~$cv$ in view~$v$ in line~\ref{line:send-commit}, replica~$r$ sets $high\_vote$ to $cv$. After that, 
  replica~$r$ sends its timeout vote~$tv$ with~$high\_vote = cv$ in line~\ref{line:send-timeout}.
  The above conclusion holds true for all replicas in~$Q$.
  From this and inequality~(\ref{eq:lemma2-subquorum}), we conclude that, when computing~$high\_vote(\tqc)$, the condition in line~\ref{line:q1} holds true.
  Therefore, there is at least one subquorum of~$n - 3f$ timeout votes.

  It remains to show that the condition in line~\ref{line:q2} does not hold true, when
  computing~$high\_vote(\tqc)$. On contrary, assume that there is another
  subquorum~$Q' \subseteq \signers{\tqc}$ with $Q' \cap Q = \emptyset$ and $|Q'| \ge n - 3f$.
  By the protocol assumption, $n > 5f$, and thus, $|Q'| > 2f$.
  As~$|\signers{\cqc}| \ge n - f$, by Corollary~\ref{cor:commit-intersections}, we have
  $|\,\signers{\cqc}\, \cap\, \Corr{}\,| \ge n - 2f$. Hence, at least $n - 2f$ correct replicas
  send a timeout vote that has~$cv$ as the high vote, which, by our assumption, is different from
  the high vote sent by the replicas in~$Q'$. As the total number of replicas is~$n$,
  we conclude that $|Q'| \le 2f$. We arrive at the contradiction with $|Q'| > 2f$.
  Therefore, there is no other subquorum of~$n - 3f$ timeout votes.

  We conclude that $high\_vote(\tqc) = cv = \cqc.vote$. This immediately gives us the proof
  goal~(\ref{eq:hv-eqmod}).

\textit{(ii) \underline{Case $v = view(\tqc) > \cqc.vote.view$.}} Consider an arbitrary correct
  replica~$q$ in view~$v$ that has signed both the timeout QC~$\tqc$ and the commit QC~$\cqc$,
  that is, $q \in Q$.
  First, we show that~$q$ sends a timeout vote~$tv$ in view~$v$ that has the following property:
\begin{align}
    tv.high\_vote \eqmod \cqc.vote \label{eq:tv-eqmod}
  \end{align}

  After proving~(\ref{eq:tv-eqmod}) for~$q \in Q$, we will show that the condition(\ref{eq:hv-eqmod})
  is satisfied.
  
  \underline{Proving~(\ref{eq:tv-eqmod})}. To avoid confusion, we use the
  notation~$\tqc_v$ for $\tqc$ in the rest of the proof. First,
  we notice that~$q$ must have received a
  timeout QC as a quorum certificate in the proposal handler (line~\ref{line:on_proposal}).
  Otherwise, it would have
  advanced the block number to~$\cqc.vote.block\_number$ in line~\ref{line:implied-next-block},
  which contradicts the assumption~(\ref{eq:next_block}). Let's call this timeout quorum
  certificate~$\tqc_{v-1}$. We notice that the invariant $\invTwo{v - 1, \cqc, \tqc_{v-1}}$
  holds by the lemma assumption. This gives us:
\begin{align}
    high\_vote(\tqc_{v-1}) \eqmod \cqc.vote \label{eq:hv-pred-v}
  \end{align}
  
  Further, to sign ~$\tqc_v$ in view~$v$, the replica~$q$ had to send a timeout
  vote in line~\ref{line:send-timeout}, which required $phase \ne \textit{Timeout}$.
  Let~$s$ be a local state of replica~$r$ just before the transition, in which~$r$ sent a
  timeout vote for the view~$v$. Note that $q$ sends a timeout vote with $high\_vote$ set
  to $s.high\_vote$. Consider two remaining cases on~$s.phase$:

  \begin{enumerate}
    \item $s.phase = \textit{Prepare}$. In this case, replica~$q$ set the value of $s.high\_vote$
      in view~$v - 1$. Hence, $r$ has sent a timeout vote~$tv$ with the field $high\_vote$ set to
      $high\_vote(\tqc_{v-1})$ in view~$v$. From this and~(\ref{eq:hv-pred-v}), we
      have the required condition~(\ref{eq:tv-eqmod}).
      
    \item $s.phase = \textit{Commit}$. In this case, replica~$q$ sets the value of $high\_vote$
      in view~$v$ when processing a proposal message in line~\ref{line:on_proposal}.
      As we have concluded above, $q$ receives $T_{v-1}$ as a quorum certificate.
      From~(\ref{eq:hv-pred-v}) and assumption~(\ref{eq:next_block}),
      we conclude that~$q$ executed line~\ref{line:implied-reproposal} in
      \texttt{get\_implied\_block}. As a result, $q$ sets its field $high\_vote$ to
      \texttt{<CommitVote, $T_{v-1}.view$ + 1, $\cqc.vote.block\_number$, $\cqc.vote.block\_hash$>}
      in line~\ref{line:create-commit}. Replica~$q$ uses this vote as $high\_vote$ when
      sending a timeout vote. The condition~(\ref{eq:tv-eqmod}) follows.
  \end{enumerate}

  We have shown that the condition~(\ref{eq:tv-eqmod}) holds true. Since we have fixed an
  arbitrary replica~$q \in Q$, and $|Q| \ge n - 3f$, by~(\ref{eq:lemma2-subquorum}),
  we conclude that when computing~$high\_vote(\tqc)$, the condition in line~\ref{line:q1}
  of \texttt{get\_implied\_block} holds true. That is, there is at least one subquorum
  of~$n - 3f$ timeout votes.

  It remains to show that the condition in line~\ref{line:q2} of \texttt{get\_implied\_block}
  does not hold true when computing~$high\_vote(\tqc)$, that is, there is no other subquorum
  of at least~$n - 3f$ timeout votes.
  
  To this end, fix an arbitrary replica~$r$ from~$\signers{\cqc} \cap \Corr$ and the local
  state~$t$ of~$r$ after sending its timeout vote for~$\tqc_{v - 1}$ in view~$v - 1$. Since
  $v > \cqc.vote.view$ and~$t.view = v - 1$,
  the invariant~$\invOne{v - 1, t, \cqc}$ holds true by the lemma assumption. The conditions
  (\ref{eq:view-under})--(\ref{eq:view-eq}) do not hold by our assumptions about the state~$t$.
  We conclude that either (\ref{eq:hv-eq}), or (\ref{eq:block-no-above}) hold for the state~$t$
  of replica~$r$. Notice that if the condition~(\ref{eq:block-no-above}) was true, then
  $s.high\_commit.vote.block\_number \ge \cqc.vote.block\_number$, and, as a result:
\begin{align}
  \tqc.high\_commit\_qc.vote.block\_number \ge \cqc.vote.block\_number
  \end{align}
  
  This is due to~$r$ having progressed to the next block and having $high\_commit\_qc$
  set to~$\cqc$ or a higher commit QC. This contradicts
  the assumption~(\ref{eq:next_block}). Hence, the condition~(\ref{eq:hv-eq}) holds true for~$r$.
  As a result, every correct replica from~$\signers{\cqc} \cap \Corr$ has its high vote set
  to~$\cqc.vote$. Recall that there are at least~$n - f$ replicas
  in~$\signers{\cqc}$, and, by Corollary~\ref{cor:correct}, at least~$n - 2f$ replicas in
  $\signers{\cqc} \cap \Corr$. Therefore, at most~$2f$ replicas can send timeout votes for~$\tqc$
  that contain $high\_vote$ different from $\cqc.vote$.

  As a result, there is only one subquorum when computing~$high\_vote(\tqc)$, and this subquorum
  contains timeout votes with $high\_vote \eqmod \cqc.vote$. Hence, $high\_vote(\tqc) \eqmod \cqc.vote$.
  This finishes the proof.
\end{proof}

\begin{lemma}\label{lem:cqc-agree}
  Consider an arbitrary execution of the protocol. Let $\cqc_1$ and $\cqc_2$ be two valid commit quorum
  certificates that are issued by replicas (correct or faulty).

  If the block numbers in their commit votes coincide:
\begin{equation}
    \cqc_1.vote.block\_number = \cqc_2.vote.block\_number \label{eq:cqc-same-number}
  \end{equation}
  
  Then, the block hashes also coincide:
\begin{equation}
    \cqc_1.vote.block\_hash = \cqc_2.vote.block\_hash \label{eq:cqc-same-hash}
  \end{equation}
\end{lemma}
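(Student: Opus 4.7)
The plan is to reduce the statement to the invariants~$\invOneName$ and~$\invTwoName$
established in Lemmas~\ref{lem:inv1} and~\ref{lem:timeout-chain}. First, I would observe
that those two lemmas together form the inductive step of a strong induction on the view~$v$:
the base case~$v = 0$ is trivial, since upon start every correct replica immediately
broadcasts a timeout vote (lines~\ref{line:on-send-timeout-start}--\ref{line:on-send-timeout-end})
and thus no correct replica signs a commit vote in view~0, so no valid commit QC has
$\cqc.vote.view = 0$. Consequently, $\invOneName$ and~$\invTwoName$ hold for every view~$v \ge 0$,
for every commit QC~$\cqc$, every timeout QC~$\tqc$, and every state of every correct signer.

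Next, I would split on the views of $\cqc_1$ and $\cqc_2$. If
$\cqc_1.vote.view = \cqc_2.vote.view$, then Lemma~\ref{lem:cqc-one-vote} gives
$\cqc_1.vote = \cqc_2.vote$ directly, and in particular the block hashes agree.
Otherwise, WLOG assume $\cqc_1.vote.view < \cqc_2.vote.view$. By
Corollary~\ref{cor:commit-intersections}, $\signers{\cqc_1}\,\cap\,\signers{\cqc_2}\,\cap\,\Corr$
has at least $n - 3f \ge 1$ elements; fix a correct replica~$r$ in this intersection.

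Now I would consider the local state~$s$ of~$r$ immediately after~$r$ executed the body
of the proposal handler in which it broadcast the commit vote included in~$\cqc_2$.
By the code in lines~\ref{line:create-commit}--\ref{line:send-commit}, at that point
$s.view = \cqc_2.vote.view$, $s.phase = \text{Commit}$, and $s.high\_vote = \cqc_2.vote$.
Apply $\invOne{\cqc_2.vote.view, s, \cqc_1}$, which is in scope since
$r \in \signers{\cqc_1}$, $s.view = \cqc_2.vote.view$, and $\cqc_1.vote.view < \cqc_2.vote.view$.
Condition~(\ref{eq:view-under}) fails because $s.view > \cqc_1.vote.view$;
condition~(\ref{eq:view-eq}) fails because $s.view \ne \cqc_1.vote.view$ (and also
$s.phase \ne \text{Prepare}$); condition~(\ref{eq:block-no-above}) fails because
$s.high\_vote.block\_number = \cqc_2.vote.block\_number = \cqc_1.vote.block\_number$
by hypothesis~(\ref{eq:cqc-same-number}). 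Therefore condition~(\ref{eq:hv-eq}) must hold,
giving $\cqc_2.vote \eqmod \cqc_1.vote$; by the definition of~$\eqmod$, the block
hashes coincide, which is~(\ref{eq:cqc-same-hash}).

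The real work has already been carried out in Lemmas~\ref{lem:inv1}
and~\ref{lem:timeout-chain}; here the only mild obstacle is to pick the right state
of~$r$ (the one immediately after signing for~$\cqc_2$) so that the invariant
$\invOneName$ applies and the only surviving disjunct is the one forcing agreement
of the two votes modulo view.
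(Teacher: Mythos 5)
Your proof is correct, and for the cross-view case it takes a genuinely different (and more economical) route than the paper. The paper's \textit{Case (ii)} fixes a correct replica $r \in \signers{\cqc_1} \cap \signers{\cqc_2} \cap \Corr$ and then splits on the \emph{type of justification} that $r$ received in view $\cqc_2.vote.view$: if it was a timeout QC, the paper invokes Lemma~\ref{lem:timeout-chain} ($\invTwoName$) to show the block hash of $\cqc_1$ is re-proposed; if it was a commit QC $\cqc_3$, the paper derives a contradiction through three further sub-cases (only the last of which uses $\invOneName$). You instead bypass the justification-type split entirely: you take the state $s$ of $r$ immediately after it voted for $\cqc_2$ (so $s.view = \cqc_2.vote.view$, $s.phase = \text{Commit}$, $s.high\_vote = \cqc_2.vote$), apply $\invOne{\cqc_2.vote.view, s, \cqc_1}$, and eliminate disjuncts (\ref{eq:view-under}), (\ref{eq:view-eq}) by the view comparison and (\ref{eq:block-no-above}) by hypothesis~(\ref{eq:cqc-same-number}), leaving (\ref{eq:hv-eq}), which is exactly $\cqc_2.vote \eqmod \cqc_1.vote$. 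This is sound: the scoping conditions of Definition~\ref{def:inv1} are met ($r \in \signers{\cqc_1}$, $s.view \le v$, $\cqc_1.vote.view < v$), and your preliminary observation that Lemmas~\ref{lem:inv1} and~\ref{lem:timeout-chain} close a simultaneous strong induction on views (with a trivial base at $v=0$) is exactly the paper's own stated reading of those lemmas, so $\invOneName$ is available at view $\cqc_2.vote.view$ without circularity. What you lose relative to the paper is the explicit structural information — that the justification for view $\cqc_2.vote.view$ must have been a timeout QC and that the hash propagates via re-proposal in line~\ref{line:implied-reproposal} — but as a derivation of~(\ref{eq:cqc-same-hash}) your argument is complete; note that Lemma~\ref{lem:timeout-chain} still does real work for you, just one level down, inside the inductive step that establishes $\invOneName$.
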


\begin{proof}
  Consider two commit quorum certificates~$\cqc_1$ and $\cqc_2$ that
  satisfy~(\ref{eq:cqc-same-number}). Without loss of generality, assume that
  $\cqc_1.vote.view \leq \cqc_2.vote.view$.
  
  \underline{\textit{Case (i):}} $\cqc_1.vote.view = \cqc_2.vote.view$. We invoke
  Lemma~\ref{lem:cqc-one-vote}, to achieve our proof goal~(\ref{eq:cqc-same-hash}).

  \underline{\textit{Case (ii):}} $\cqc_1.vote.view < \cqc_2.vote.view$. By
  Corollary~\ref{cor:commit-intersections}
  and Corollary~\ref{cor:correct}, at least~$n - 3f$ correct replicas have signed~$\cqc_1$ and~$\cqc_2$.
  Since $n \ge 5f + 1$, by the protocol assumptions, there is at least one correct
  replica~$r \in \signers{\cqc_1} \cap \signers{\cqc_2} \cap \Corr$.

  Consider two main cases of choosing the block to propose or re-propose, according 
  to the definition of~\texttt{get\_implied\_block}:

  \begin{itemize}
    \item \underline{\textit{Case (ii.1):}} Replica~$r$ has received a timeout quorum certificate~$\tqc$
        for the view~$\cqc_2.vote.view - 1$ as a proposal justification in the view $\cqc_2.vote.view$.
   
        In this case, replica~$r$ invoked line~\ref{line:implied-reproposal} 
        in~\texttt{get\_implied\_block}. Otherwise, $r$ would increase the block number, either
        in line~\ref{line:implied-next-block} or in line~\ref{line:implied-next-block2}, as required
        by all other cases of~\texttt{get\_implied\_block}.
        By Lemma~\ref{lem:timeout-chain}, the invariant~$\invTwo{\cqc_2.vote.view, \cqc_1, \tqc}$
        holds true.
        
        to obtain the following:
\begin{equation*}
            high\_vote(\tqc) \eqmod \cqc_1.vote \label{eq:hv-eqmod1}
        \end{equation*}
        
        Hence, the replica $r$ returned $\cqc_1.vote.block\_hash$
        in line~\ref{line:implied-reproposal} when processing the proposal in view $\cqc_2.vote.view$.
        Further, replica $r$ created a new commit vote with
        this block hash in line~\ref{line:create-commit} and sent it in line~\ref{line:send-commit}.
        As all signers of~$\cqc_2$ have sent the same commit vote as~$r$, we immediately arrive
        at the proof goal~(\ref{eq:cqc-same-hash}).

    \item \underline{\textit{Case (ii.2):}} Replica~$r$ has received a commit quorum
        certificate~$\cqc_3$ for the view~$\cqc_2.vote.view - 1$ as a proposal justification in
        the view $\cqc_2.vote.view$. Hence, function \texttt{get\_implied\_block} in 
        line~\ref{line:implied-next-block} increases the block number. Therefore, we have:
\begin{align}
            \cqc_3.vote.block\_number = \cqc_1.vote.block\_number - 1 = \cqc_2.vote.block\_number - 1
            \label{eq:c3-block-number}
        \end{align}

        Now we consider the relation between~$\cqc_1$ and~$\cqc_3$.
        By Corollary~\ref{cor:commit-intersections} and Corollary~\ref{cor:correct}, at least~$n - 3f$ correct replicas have signed~$\cqc_1$ and~$\cqc_3$. 
        Since $n \ge 5f + 1$, by the protocol assumptions, there is at least one correct
        replica~$r' \in \signers{\cqc_1} \cap \signers{\cqc_3} \cap \Corr$.

        It remains to consider three cases on the relation between the views of~$\cqc_1$ and~$\cqc_3$:

        \begin{itemize}
            \item \underline{\textit{Case (ii.2.a):}} $\cqc_1.vote.view = \cqc_3.vote.view$.
            In this case, we apply Lemma~\ref{lem:cqc-one-vote} to obtain~$\cqc_1.vote =\cqc_3.vote$.
            This immediately gives us a contradiction with equation~(\ref{eq:c3-block-number}).
            Thus, this case is impossible.

            \item \underline{\textit{Case (ii.2.b):}} $\cqc_1.vote.view > \cqc_3.vote.view$.
            In this case, we have:
\begin{align}
              \cqc_1.vote.view &< \cqc_2.vote.view, \text{ by the assumption of Case (ii)}
                \label{eq:c2-view-lt-c3-view} \\
              \cqc_2.vote.view &= \cqc_3.vote.view + 1 \label{eq:c2-view-succ-c3-view}
            \end{align}

            By substituting the value of~$\cqc_2.vote.view$ from equation~(\ref{eq:c2-view-succ-c3-view})
            into equation~(\ref{eq:c2-view-lt-c3-view}), we
            obtain~$\cqc_1.vote.view < \cqc_3.vote.view + 1$. This is in immediate contradiction
            with the case assumption~$\cqc_1.vote.view > \cqc_3.vote.view$. Thus, this case is impossible.

            \item \underline{\textit{Case (ii.2.c):}} $\cqc_1.vote.view < \cqc_3.vote.view$.
            Consider the state~$s'$ of replica~$r'$ right after it has sent its commit vote for~$\cqc_3$
            in view~$\cqc_3.vote.view$ in line~\ref{line:send-commit}. By applying Lemma~\ref{lem:inv1},
            we obtain validity of~$\invTwo{\cqc_3.vote.view, s', \cqc_1}$.
            Hence, one of the conditions~(\ref{eq:view-under})--(\ref{eq:block-no-above}) must
            be satisfied. The conditions~(\ref{eq:view-under})--(\ref{eq:view-eq}) are not
            satisfied, since~$\cqc_3.vote.view > \cqc_1.vote.view$ (by the case assumption).
            The conditions~(\ref{eq:hv-eq})--(\ref{eq:block-no-above}) are not satisfied,
            due to equation~(\ref{eq:c3-block-number}). Thus, we arrive at a contradiction.
        \end{itemize}

        As a result, the case of replica~$r$ using a commit QC as a justification for sending
        a commit vote for the commit quorum certificate~$\cqc_2$, that is \textit{Case (ii.2)},
        is impossible.
  \end{itemize}

  Finally, we have shown that \textit{Case~(i)} and \textit{Case~(ii.1)} give us the proof
  goal~(\ref{eq:cqc-same-hash}), whereas \textit{Case~(ii.2)} is impossible. This finishes the proof.
\end{proof}

Having Lemma~\ref{lem:cqc-agree}, it is easy to show that the protocol satisfies the agreement property:

\begin{theorem}[Agreement]\label{thm:agreement}
  Consider an arbitrary execution of the protocol. Let $r_1$ and $r_2$ be two correct replicas.
  If the replicas $r_1$ and $r_2$ commit blocks $b_1$ and $b_2$ for a block number $n \in \mathbb{N}_0$,
  respectively, then $b_1 = b_2$.
\end{theorem}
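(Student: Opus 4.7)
The plan is to reduce the agreement theorem to Lemma~\ref{lem:cqc-agree} by showing that whenever a correct replica commits a block, there must exist a valid commit quorum certificate that \emph{witnesses} that commitment, binding the block number to the block hash. Once this witness is identified for both $r_1$ and $r_2$, Lemma~\ref{lem:cqc-agree} directly yields equality of hashes, from which equality of blocks follows by collision resistance of the hash function (guaranteed by the perfect-security assumption on cryptographic primitives in the system model).

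First, I would inspect the protocol pseudocode to find every location where a correct replica appends to \texttt{committed\_blocks} and thus performs a commit. The only such location is line~\ref{line:commit} inside \texttt{process\_commit\_qc}, which is guarded by the precondition that \texttt{qc} is not \texttt{None} and the replica has a cached block~$b$ matching \texttt{(qc.vote.block\_number, qc.vote.block\_hash)}. Since \texttt{process\_commit\_qc} is only invoked with quorum certificates that have previously been validated (in the commit-vote handler, the proposal handler, the timeout handler, and the new-view handler), we know that for each commit performed by a correct replica there is a \emph{valid} commit QC whose vote's block number equals $n$ and whose vote's block hash equals $\mathit{hash}(b_i)$.

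Next, given the commits of $r_1$ and $r_2$ at block number $n$, extract the corresponding valid commit QCs $\cqc_1$ and $\cqc_2$ with $\cqc_i.vote.block\_number = n$ and $\cqc_i.vote.block\_hash = \mathit{hash}(b_i)$ for $i \in \{1,2\}$. Applying Lemma~\ref{lem:cqc-agree} to $\cqc_1$ and $\cqc_2$ (whose block numbers agree by assumption) yields $\cqc_1.vote.block\_hash = \cqc_2.vote.block\_hash$, hence $\mathit{hash}(b_1) = \mathit{hash}(b_2)$. By the assumption that the hash function is perfectly secure (i.e., collision resistant), this implies $b_1 = b_2$, which is the theorem's conclusion.

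The main obstacle is not algebraic but rather bookkeeping: making sure that every code path leading to a commit by a correct replica truly originates from a \emph{valid} commit QC carrying the committed block's number and hash. In particular, one must check that the QC passed to \texttt{process\_commit\_qc} in each caller (the commit-vote aggregation producing \texttt{<CommitQC, vote, agg\_sig>} in line~\ref{line:commit-qc}, the proposal handler, the timeout handler, and the new-view handler) is either constructed from at least $n-f$ verified signatures or received as an already-verified field of another message; and that the cached block condition in \texttt{process\_commit\_qc} ensures $\mathit{hash}(b) = \cqc.vote.block\_hash$. Once these routine checks are stated, the theorem follows directly.
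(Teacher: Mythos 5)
Your proposal is correct and follows essentially the same route as the paper's proof: identify line~\ref{line:commit} in \texttt{process\_commit\_qc} as the unique commit site, extract the two valid commit QCs binding block number~$n$ to the hashes of $b_1$ and $b_2$, and invoke Lemma~\ref{lem:cqc-agree} to equate the hashes and hence the blocks. The extra bookkeeping you flag (validity of the QCs on every call path and the cached-block hash match) is exactly what the paper's terser argument leaves implicit.
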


\begin{proof}
  The only way for $r_1$ and $r_2$ to commit a block is by executing line~\ref{line:commit}
  in Listing~\ref{lst:replica}, in a call to~\texttt{process\_commit\_qc}. Let $\cqc_1$ and $\cqc_2$ be the commit
  quorum certificates that are passed to~\texttt{process\_commit\_qc} by~$r_1$ and~$r_2$, respectively.
  Since blocks $b_1$ and $b_2$ have the same block number~$n$,
  we invoke Lemma~\ref{lem:cqc-agree}. As a result, the blocks $b_1$ and $b_2$
  have the same hash. Hence, we conclude that $b_1$ and $b_2$ are identical.
\end{proof}

The second important safety property is validity, which essentially holds by algorithm design:

\begin{theorem}[Validity]\label{thm:validity}
  Consider an arbitrary execution of the protocol. If a correct
  replica~$r$ commits a block~$b$, this block~$b$ passes block verification,
  and it was proposed earlier.
\end{theorem}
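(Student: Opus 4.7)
The plan is to trace the execution path that culminates in a correct replica committing a block, and show that each step forces the two required properties. As in the agreement proof, the only way a correct replica $r$ commits a block $b$ is by executing line~\ref{line:commit} inside \texttt{process\_commit\_qc}. That line is guarded by the check that $r$ has a cached block $b$ whose hash matches $qc.vote.block\_hash$ for the current commit QC~$qc$. So the entire argument reduces to characterizing \emph{when} a block becomes cached in a correct replica's state.

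Next I would inspect the replica pseudocode for every statement that writes into the block cache. There is exactly one such point: in the \texttt{Proposal} handler (lines~\ref{line:on-proposal-begin}--\ref{line:on-proposal-end}), a replica caches a block only on the branch where \texttt{block\_hash is None}, i.e.\ a fresh proposal (not a reproposal). On that branch the handler first executes \texttt{assume(r.verify\_block(block\_number, block))} at line~\ref{line:on-proposal-verify-block}; by the semantics of \texttt{assume}, the handler proceeds only if block verification succeeds, and then the block is cached together with its hash. Therefore: any block $b$ that is ever cached by correct replica $r$ arrived in a \texttt{Proposal} message from some replica (so it was \emph{proposed earlier}) and passed $r$'s \texttt{verify\_block} (so it \emph{passes block verification}). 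Combined with the previous paragraph, this gives exactly the two conclusions of the theorem.

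The main subtlety — and the only place where care is required — is the reproposal branch of the \texttt{Proposal} handler, together with the catch-up remark in lines~\ref{line:process-commit-start}--\ref{line:process-commit-end}. In a reproposal, the \texttt{Proposal} carries no block contents (only a hash via $block\_hash$), and no caching occurs during its processing; similarly, a replica that obtains a \texttt{CommitQC} via \texttt{NewView} without ever having seen the block can update \texttt{high\_commit\_qc} but cannot cache. Neither case threatens validity, however, precisely because \texttt{process\_commit\_qc} appends to \texttt{committed\_blocks} and invokes \texttt{commit($b$)} only when the block is already cached: so in both situations $r$ cannot commit until some earlier \texttt{Proposal} carrying the actual block $b$ (with matching hash and verification) has been processed by $r$. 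The auxiliary block fetcher referenced in the overview is outside the basic specification and does not affect this argument.
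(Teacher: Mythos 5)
Your proof is correct and follows essentially the same route as the paper's: commitment only happens in \texttt{process\_commit\_qc} when a cached block exists, and caching only happens in the \texttt{Proposal} handler after the \texttt{verify\_block} check at line~\ref{line:on-proposal-verify-block}, so the block was proposed and verified. Your extra discussion of the reproposal and catch-up cases is a more explicit elaboration of the same argument, not a different approach.
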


\begin{proof}
  The only possibility for a correct replica to cache a block (not only its 
  hash) is by processing a proposal and verifying it in
  line~\ref{line:on-proposal-verify-block}. Further, a block is committed  in 
  line~\ref{line:process-commit-end} under the condition that the block
  was previously cached by a replica. Hence, validity holds true.
\end{proof}

\subsection{Liveness}

To reason about liveness, we need additional assumptions about the communication model.

\subsubsection{Assumptions about time and message delivery}\label{sec:partial-sync}

Similar to HotStuff-2~\cite{malkhi23}, we assume partial synchrony~\cite{dwork1988consensus}.
More precisely, we assume that in every protocol execution, there is a global-stabilization-time
(GST) such that after \gst{} all messages from correct replicas to correct replicas arrive within
the bound of~$\delay$ time units. \gst{} is unknown to the replicas and thus cannot be used in the protocol.
In practice, the periods of asynchrony and synchrony may alternate. However, we follow the standard
assumptions of \gst{}, to keep the proofs simple.

It is important to know the guarantees on delivery of the messages that were sent before~\gst{}.
In ChonkyBFT, we assume that every replica periodically retransmits the latest message of each type
that it had sent previously, that is, the latest messages of types
\textit{SignedCommitVote}, \textit{SignedTimeoutVote}, and \textit{NewView}. Hence, after reaching
\gst{}, each correct replica receives the latest message of each type sent by every correct
replica before \gst{}. We assume that all such messages are delivered by the time~$\gst + \delay$ at latest.
In the following proofs, we refer to time points such as~$t$ and $t + \delay$. Hence, our proofs assume
the global view of the system. The replicas do not have access to a global clock,
but only can measure timeouts such as~$\delay$, which assumes that replicas' clocks advance at the same rate.

\subsubsection{Block synchrony}\label{sec:block-sync}

We notice that a correct replica in a view~$v$ can catch up with faster correct
replicas in a view $v' > v$ by receiving a \textit{NewView} message, \textit{SignedCommitVote}
from~$n - f$ replicas, or~\textit{SignedTimeoutVote} from~$n - f$ replicas. By doing so,
a correct replica may skip receiving a full block that was proposed in the view~$v$.
Moreover, by catching up with the faster replicas, such a slower replica may introduce
a gap in its list~\texttt{committed\_blocks}.

ChonkyBFT replicas run an additional protocol to fetch blocks that were proposed, cached,
and committed by other replicas. We do not detail this protocol, but simply
assume that there is a constant~$\tsync \ge \delay$ and the synchronization protocol that
satisfy the following:

\begin{itemize}
  \item Whenever a block is proposed at time~$t \ge \gst{}$, this block is delivered
        to every correct replica by the time~$t + \tsync$ and cached by it.

  \item Whenever a correct replica commits a block with a commit QC at time~$t$,
        by the time $max(t, \gst) + \tsync{}$,
        this block and the quorum certificate are delivered to every correct replica,
        and every correct replica adds the block to its list~\texttt{committed\_blocks}.
        (It is safe to do so by Theorem~\ref{thm:agreement}).
\end{itemize}

\subsubsection{Proofs of Liveness}

To show liveness of ChonkyBFT, we have to demonstrate that the property of Progress
(Section~\ref{sec:properties}) holds true. To this end, we first prove
Lemmata~\ref{lem:sync-views}--\ref{lem:faulty-leader-progress}.

The following lemma establishes view synchronization among the correct replicas:

\begin{lemma}\label{lem:sync-views}
  Let $v$ be the view a correct replica~$r$ at time~$t \ge \gst$.
  By the time~$t + \delay$, every correct replica is in the view~$v$, or in a greater
  view.
\end{lemma}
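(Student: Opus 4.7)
The plan is to trace the view advancement back to the moment $r$ entered view $v$, observe that this produced a broadcast \texttt{NewView} message, and then invoke the retransmission/GST assumptions from Section~\ref{sec:partial-sync} to conclude that every correct replica has learned of this message (or has already moved past $v$) by $t + \delay$.

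First I would dispose of the boundary case. If $v = 0$ (or the initial \texttt{None} view), every correct replica is trivially in view $\ge v$ from the start, so the claim holds immediately. For $v > 0$, I would argue that since views of correct replicas are monotonically non-decreasing in the protocol (every update to \texttt{r.view} happens through \texttt{start\_new\_view(view')} in line~\ref{line:start-new-view-start} with \texttt{view'} strictly greater than the current view, guarded by the conditions in lines~\ref{line:on-commit-begin}, \ref{line:on-timeout-start}, and \ref{line:on-new-view-start}), there is an earliest time $t' \le t$ at which $r$ called \texttt{start\_new\_view(v)}. That call broadcast a \texttt{NewView} message carrying a justification QC $qc$ with $view(qc) = v - 1$.

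Next I would bound the delivery time of this \texttt{NewView} to an arbitrary correct replica $r'$. There are two cases. If $t' \ge \gst$, standard post-GST synchrony gives delivery by $t' + \delay \le t + \delay$. If $t' < \gst$, I appeal to the retransmission assumption: because $r$ is still in view $v$ at time $t$ (by hypothesis), the latest \texttt{NewView} that $r$ has sent is for view $v$, and by the paragraph on retransmission in Section~\ref{sec:partial-sync} this message is delivered to every correct replica by time $\gst + \delay \le t + \delay$. Either way, $r'$ has received (or will have received) some \texttt{NewView} from $r$ with $view(qc) = v - 1$ by time $t + \delay$.

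Finally I would unpack the \texttt{NewView} handler in lines~\ref{line:on-new-view-start}--\ref{line:on-new-view-end} to show that reception forces $r'.\texttt{view} \ge v$. At the moment $r'$ receives the message, if $r'.\texttt{view} > v$ we are done immediately; if $r'.\texttt{view} = v$, also done; if $r'.\texttt{view} < v$, then $view(qc) + 1 = v > r'.\texttt{view}$, so the guard in line~\ref{line:on-new-view-start} holds and \texttt{start\_new\_view}$(view(qc)+1)$ executes, advancing $r'.\texttt{view}$ to $v$. The main obstacle I anticipate is simply careful bookkeeping of time: ensuring that in the pre-GST case the retransmission assumption applies to the correct ``latest'' \texttt{NewView} (which is indeed the one for view $v$, since $r$ has not moved past $v$ by $t$), and arguing cleanly that even if $r'$ receives an earlier or later \texttt{NewView} first the monotonicity of views still yields $r'.\texttt{view} \ge v$ by $t + \delay$.
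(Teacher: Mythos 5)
Your proof is correct and follows essentially the same route as the paper's: handle $v=0$ trivially, observe that $r$ broadcast a \texttt{NewView} upon entering view $v$, use the post-\gst{} delivery bound together with the retransmission-of-latest-messages assumption to get delivery by $t+\delay$, and then note that the \texttt{NewView} handler forces any lagging replica up to view $v$. Your version merely spells out the pre-/post-\gst{} case split and the view-monotonicity bookkeeping a bit more explicitly than the paper does.
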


\begin{proof}
  Fix a correct replica~$r$, and let~$v$ be its view at time~$t$. If $v = 0$, then the proof
  goal follows immediately, since all correct replicas are in view~0 or great.
  Consider the case of $v > 0$. In this case, replica~$r$ sends a 
  message~$\msg{\textit{NewView}, qc}$ in
  line~\ref{line:send-new-view}. Since~$r$ is in the view~$v$ at $t \ge \gst$, the 
  message~$\msg{\textit{NewView}, qc}$ is either
  the latest message sent by~$r$ before~\gst, or it is a message sent by~$r$ after~\gst.
  By our assumption of partial synchrony, in both cases, the
  message~$\msg{\textit{NewView}, qc}$ must be delivered to all correct replicas by the
  time~$t + \delay$.

  Consider an arbitrary replica~$r'$. Replica~$r'$ receives the
  message~$\msg{\textit{NewView}, qc}$ in line~\ref{line:on-new-view-start}
  at time $t' \in [t, t + \delay]$. If~$r'.view < v$ at~$t'$, replica~$r'$
  immediately switches to the view~$v$, as prescribed in line~\ref{line:on-new-view-end}.
  This finishes the proof.  
\end{proof}

The following lemma establishes block progress for the case of a correct leader:

\begin{lemma}\label{lem:same-view-commit}
  Let $v_{max}$ be the maximum view among the views of the correct replicas at a
  time point~$t \ge \gst{}$, and no correct replica is in view~$v_{max}$ at every earlier
  time point $t' < t$. Assume that the following conditions are satisfied:
\begin{enumerate}
      \item The leader of view~$v$ is correct, and \label{item:leader-is-correct}
      
      \item The timeout value is large enough: $\timeout > \tsync + 2\delay$.  \label{item:timeout-is-large}
  \end{enumerate}

  Then, by the time~$t + \tsync + 2\delay$, the following holds true:

  \begin{enumerate}
      \item All correct replicas reside in views no less than~$v_{max}+1$,
        and \label{item:next-view}
        
      \item All correct replicas send a \textit{NewView} message with a commit
            QC for the same block number as a justification. \label{item:send-new-view}
  \end{enumerate}
\end{lemma}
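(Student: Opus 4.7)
My plan is to combine the view-synchronization guarantee of Lemma~\ref{lem:sync-views} with the block-delivery guarantee of Section~\ref{sec:block-sync}, and then exploit the inequality $\timeout > \tsync + 2\delay$ to rule out any timeout before the commit phase completes. The first step is to apply Lemma~\ref{lem:sync-views} to the correct replica witnessing $v_{max}$ at time~$t$: this yields that by time $t + \delay$ every correct replica resides in a view at least~$v_{max}$. In particular, the correct leader of~$v_{max}$ has entered its proposer logic (Listing~\ref{lst:proposer}) at some time $t_L \in [t, t+\delay]$, unless some correct replica has already advanced past $v_{max}$, a situation I would handle separately by propagating its NewView via Lemma~\ref{lem:sync-views} a second time.

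Next I would argue that the leader can fire its broadcast at line~\ref{line:propose-end}. Since $t_L \ge \gst$ and every block that any correct replica committed at time~$s \le t_L$ has been delivered to every correct replica by~$\max(s,\gst)+\tsync \le t_L+\tsync$, the \texttt{assume} in line~\ref{line:proposer-has-blocks} is satisfied no later than $t_L+\tsync \le t + \delay + \tsync$. The proposal therefore reaches every correct replica by time $t + 2\delay + \tsync$. Here I would invoke the crucial timing inequality: every correct replica entered $v_{max}$ at some $t' \in [t, t+\delay]$ and (re)started its timer (line~\ref{line:reset-timer}), so it remains in the \textit{Prepare} phase up to time $t' + \timeout > t + \tsync + 2\delay$. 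Consequently, each correct replica is still in \textit{Prepare} when the proposal arrives, passes the proposal handler (lines~\ref{line:on-proposal-begin}--\ref{line:on-proposal-end}), and broadcasts a \textit{SignedCommitVote}.

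Then I would track the commit phase: the $n-f$ correct commit votes exchange within one more $\delay$-window, so by time $t + \tsync + 2\delay$ (or at worst one additional hop which I would either absorb into the time bound or eliminate by noting that the leader's own vote is instantaneous and the slowest replica only needs the votes it is itself collecting), every correct replica triggers the handler at lines~\ref{line:on-commit-begin}--\ref{line:on-commit-end}, assembles a CommitQC for a common vote (unique by Lemma~\ref{lem:cqc-one-vote}), calls \texttt{process\_commit\_qc} to append the block, and fires \texttt{start\_new\_view}$(v_{max}+1)$, which broadcasts the required $\langle\text{NewView},\cqc\rangle$ and moves each replica into view $v_{max}+1$. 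Thus both Conclusions~\ref{item:next-view} and~\ref{item:send-new-view} hold, with the CommitQC being for a common block number because all commit votes carry the block number implied by the leader's justification via \texttt{get\_implied\_block}.

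The main obstacle, in my view, is the careful timing accounting rather than any new combinatorial argument. Specifically, I would spend the most care on (i) the case where some correct replica has already passed $v_{max}$ via an externally received QC when the leader is still in~$v_{max}$ — here I must argue that such a replica must have received a valid CommitQC for $v_{max}$ (a TimeoutQC being impossible since no correct replica has yet sent a timeout vote within the inequality window), which lets me rebroadcast this CommitQC and drag everyone into $v_{max}+1$ with the required justification; and (ii) the tight use of $\timeout > \tsync + 2\delay$, which must be verified at the earliest correct timer, not the latest, and which is also the reason the lemma's block-synchrony assumption $\tsync \ge \delay$ is essential to make the leader's cache-catch-up fit inside the same window.
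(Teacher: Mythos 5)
Your overall strategy is the paper's: synchronize views with Lemma~\ref{lem:sync-views}, use the block-synchrony assumption of Section~\ref{sec:block-sync} to discharge the \texttt{assume}s at lines~\ref{line:proposer-has-blocks} and~\ref{line:on-proposal-has-blocks}, let the correct leader's single proposal drive every correct replica through the commit-vote exchange, and use $\timeout > \tsync + 2\delay$ (measured from the \emph{earliest} timer, i.e.\ from $t$ itself, which is why the hypothesis that no correct replica was in $v_{max}$ before $t$ matters) to rule out any timeout QC, noting that $f$ faulty timeout votes alone cannot reach $n-f$. The treatment of a replica that races ahead of $v_{max}$, the common block number coming from the single proposal, and the invocation of line~\ref{line:reset-timer} all match the paper.

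There is, however, one genuine gap that you flag but do not close: your schedule overshoots the claimed bound by $\delay$. You anchor the block-fetch deadline at the leader's view-entry time $t_L \le t + \delay$, obtaining ``proposal sent by $t_L + \tsync \le t + \delay + \tsync$,'' hence proposal delivered by $t + 2\delay + \tsync$ and commit votes delivered by $t + 3\delay + \tsync$. Neither of your escape hatches works: ``absorbing'' the extra $\delay$ would change the lemma statement, and the observation that the leader's own vote is instantaneous is irrelevant because the critical path runs through the \emph{last} replica to receive the proposal and then the $n-f$ votes. The correct repair, and the one the paper uses, is that the guarantee of Section~\ref{sec:block-sync} is anchored at the commit (or proposal) times of the missing blocks, all of which are at most $t$, so every needed block is cached by $\max(\cdot,\gst) + \tsync \le t + \tsync$; since $\tsync \ge \delay$, the leader is also already in view $v_{max}$ by that time. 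Hence the proposal is broadcast by $t + \tsync$ (not $t + \delay + \tsync$), received by $t + \tsync + \delay$, and the commit votes arrive by $t + \tsync + 2\delay$, exactly matching the stated bound. Rewriting your second step with this anchoring closes the gap.
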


\begin{proof}

  By Lemma~\ref{lem:sync-views}, no later than by~$t + \delay$, all correct replicas switch
  to the view~$v_{max}$, including the leader of the view~$v_{max}$. This leader
  is correct by assumption~(\ref{item:leader-is-correct}). Moreover, by the assumption of
  Section~\ref{sec:block-sync}, the condition in line~\ref{line:proposer-has-blocks}
  holds true by the time~$t + \tsync$, since $\tsync \ge \delay$. Thus, the
  replica~$\leader{v_{max}}$ sends a single proposal by the time~$t + \tsync$, say,
  $\left<\text{Proposal}, b, qc\right>$.
  
  By the time~$t + \tsync + \delay$, all correct replicas receive the proposal
  $\left<\text{Proposal}, b, qc\right>$. Again, by the assumption of
  Section~\ref{sec:block-sync}, 
  the condition in line~\ref{line:on-proposal-has-blocks}
  holds true by the time~$t + \tsync$, that is, every correct replica has received the blocks
  for the smaller block numbers. Hence, all correct replicas send their commit votes
  in line~\ref{line:send-commit} by the time~$t + \tsync + \delay$. These commit votes
  are received by the time~$t + \tsync + 2\delay$. Hence, all correct replicas switch
  to the view~$v_{max} + 1$, either by receiving commit votes from~$n - f$ replicas
  in lines~\ref{line:on-commit-begin}--\ref{line:on-commit-end}, or by
  receiving a \textit{NewView} message for the view~$v_{max}$ in
  lines~\ref{line:on-new-view-start}--\ref{line:on-new-view-end}.
  (We discuss below why it is impossible for the correct replicas to receive timeout votes
  from~$n - f$ replicas.) This proves the goal~(\ref{item:next-view}).
  
  Moreover, since all correct replicas process the same proposal,
  they process a commit QC for the same block number, when switching to the view~$v_{max}+1$
  in lines~\ref{line:start-new-view-start}--\ref{line:start-new-view-end}.
  This proves the goal~(\ref{item:send-new-view}).

  It remains to show that the correct replicas cannot switch to the view~$v_{max}+1$
  by receiving timeout votes in lines~\ref{line:on-timeout-start}--\ref{line:on-timeout-end}.
  To this end, notice that the correct replicas send their timeout votes no earlier than
  after their timers expire in
  lines~\ref{line:on-send-timeout-start}--\ref{line:on-send-timeout-end}.
  The earliest such expiration happens at time~$t + \timeout$, since by the lemma's assumption,
  the time point~$t$ is the earliest point when at least one correct replica switches to
  the view~$v_{max}$. By the assumption~(\ref{item:timeout-is-large}), we
  have~$\timeout > \tsync + 2\delay$. Therefore, the correct replicas may send timeout
  votes in view~$v_{max}$ at a time point $t' > t + \tsync + 2\delay$, or later.
  However, as we showed above, by the time $t + \tsync + 2\delay$, the correct replicas
  can switch to the view~$v_{max} + 1$. The faulty replicas alone can only send up to~$f$
  timeout votes from the unique senders, which is insufficient to collect~$n - f$ timeout
  votes. Hence, no correct replica switches to the view~$v_{max}+1$ via a timeout QC.
\end{proof}

The following lemma establishes view progress even in the case of a faulty leader, so
a correct replica has a chance of becoming a view leader:

\begin{lemma}\label{lem:faulty-leader-progress}
  Let $v_{max}$ be the maximum among the views of the correct replicas at a time
  point~$t \ge \gst{}$. By the time~$t + \timeout + 2\delay$, all correct replicas reside in
  views no less than~$v_{max} + 1$.
\end{lemma}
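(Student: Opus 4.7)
The plan is to proceed in two stages: first synchronize the correct replicas into view~$v_{max}$ using Lemma~\ref{lem:sync-views}, and then bound the time until they all advance past $v_{max}$, either because some correct replica advances early, or because timeout votes propagate and a \textit{TimeoutQC} is formed.

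First, I would apply Lemma~\ref{lem:sync-views} to any correct replica that is already in view~$v_{max}$ at time~$t$ (such a replica exists, since $v_{max}$ is the maximum view of the correct replicas at~$t$). This yields that every correct replica is in view~$v_{max}$ or higher by time~$t + \delay$, and hence the most recent reset of its timer (via line~\ref{line:reset-timer}) occurred no later than~$t + \delay$. I would then split on two cases: either some correct replica enters a view strictly greater than~$v_{max}$ at some time $t^{\star} \le t + \delay + \timeout$, or no correct replica leaves~$v_{max}$ before its timer expires.

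In the first case, a second application of Lemma~\ref{lem:sync-views} at time~$t^{\star}$ gives that every correct replica reaches a view~$> v_{max}$ by time~$t^{\star} + \delay \le t + \timeout + 2\delay$, as required. In the second case, every correct replica's timer fires no later than~$\timeout$ after its reset, hence by time~$t + \delay + \timeout$, so each correct replica broadcasts a \textit{SignedTimeoutVote} for view~$v_{max}$ via lines~\ref{line:on-send-timeout-start}--\ref{line:on-send-timeout-end} by that time (I would also note that any timeout vote already sent before~$t$ is delivered by~$t + \delay$ via the retransmission assumption of Section~\ref{sec:partial-sync}). By partial synchrony, each such vote reaches every correct replica within~$\delay$, so by time~$t + \timeout + 2\delay$ every correct replica has collected timeout votes from all $n - f$ correct replicas, triggers lines~\ref{line:on-timeout-start}--\ref{line:on-timeout-end}, and advances to view~$v_{max} + 1$.

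The main obstacle I anticipate is the timer-reset bookkeeping: a replica's timer is reset whenever its view advances, so one must verify that each correct replica's most recent reset happens no later than~$t + \delay$ and that no intermediate reset inside the window $[t, t + \delay + \timeout]$ pushes the expiration beyond~$t + \delay + \timeout$. Lemma~\ref{lem:sync-views} bounds the latest reset at~$t + \delay$, and any additional reset within the window would correspond to an advance out of view~$v_{max}$, which is already subsumed by the first case of the argument.
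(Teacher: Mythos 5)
Your proof is correct and follows essentially the same strategy as the paper's: synchronize all correct replicas into view~$v_{max}$ via Lemma~\ref{lem:sync-views}, then case-split on whether some correct replica advances past~$v_{max}$ early (propagating the advance with a second application of Lemma~\ref{lem:sync-views}) or all correct replicas time out and exchange timeout votes. The only cosmetic difference is that the paper enumerates the early-advance mechanisms as two separate cases (commit-vote quorum vs.\ \textit{NewView} message) and handles them identically, whereas you merge them into a single case; your explicit attention to the timer-reset bookkeeping and to pre-$\gst$ timeout votes covered by retransmission is a welcome addition the paper leaves implicit.
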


\begin{proof}
  A correct replica may switch to the next view on one of the following events:

  \begin{enumerate}
      \item By receiving commit votes from at least $n - f$ replicas in
            lines~\ref{line:on-commit-begin}--\ref{line:on-commit-end}.

      \item By receiving a \textit{NewView} message in
            lines~\ref{line:on-new-view-start}--\ref{line:on-new-view-end}.

      \item By receiving timeout votes from at least $n - f$ replicas in
            lines~\ref{line:on-timeout-start}--\ref{line:on-timeout-end}.
  \end{enumerate}

  The key to the proof is to show that it is impossible for the faulty replicas
  to keep the correct replicas in different views, e.g., by forming commit quorums
  with one subset of the correct replicas, so the other subset of the correct
  replicas cannot build a timeout quorum.

  To this end, we consider the three cases:

  \underline{\textit{Case (i):}} At least one correct replica switches to the
  view~$v_{max}+1$ by receiving commit votes from $n - f$ replicas in
  lines~\ref{line:on-commit-begin}--\ref{line:on-commit-end} at
  time~$t' \in [t, t + \timeout + \delay)$. In this case, invoking Lemma~\ref{lem:sync-views},
  we find that every correct replica switches to the view~$v_{max} + 1$ (or a larger view) by 
  $t' + \delay$, that is, by the time~$t + \timeout + 2\delay$.

  \underline{\textit{Case (ii):}} A correct replica receives a \textit{NewView}
  message with a valid commit QC or a valid timeout QC built in the view~$v_{max}$,
  and this message is received at time~$t' \in [t, t + \timeout + \delay)$.
  For instance, this case is possible when a faulty replica sends proposals to a subset of 
  correct replicas and builds a commit QC by adding the commit votes of the faulty replicas.
  As in Case~(i), invoking Lemma~\ref{lem:sync-views},
  we find that every correct replica switches to the view~$v_{max} + 1$ (or a larger view) by 
  time $t' + \delay$, that is, by the time~$t + \timeout + 2\delay$.

  \underline{\textit{Case (iii):}} Neither of the cases~(i) and~(ii) hold true.
  By Lemma~\ref{lem:sync-views}, all correct replicas switch to the view~$v_{max}$ by
  $t + \delay$. When switching to the view~$v_{max}$, all correct replicas set their
  timers to~$\timeout$ in line~\ref{line:reset-timer}. Since~$t \ge \gst{}$, by the
  time~$t + \delay + \timeout$, every correct replica sends its timeout vote for the
  view~$v_{max}$ in lines~\ref{line:on-send-timeout-start}--\ref{line:on-send-timeout-end}.
  Note that the correct replicas are still in the view~$v_{max}$, as the cases~(i) and~(ii)
  do not hold true by the assumption. From~$t \ge \gst{}$, we conclude that all correct
  replicas receive their timeout votes from all correct replicas by~$t + \timeout + 2\delay$.
  As a result, all correct replicas switch to the view~$v_{max} + 1$ in
  lines~\ref{line:on-timeout-start}--\ref{line:on-timeout-end}.

  We have shown the proof goal for all three cases.
\end{proof}

Finally, we are in a position to show Progress of ChonkyBFT.

\begin{theorem}[Progress]\label{thm:progress}
    Consider an execution of ChonkyBFT. Assume that the following conditions are satisfied:

    \begin{itemize}
        \item The leader function is fair, that is, for every replica~$r \in \Corr$
        and every view~$v \in \NatZero{}$, there is a view~$v' \ge v$ with $r = \leader{v'}$.

        \item The timeout value is large enough: $\timeout > \tsync + 2\delay$.

        \item The assumptions of partial synchrony (Section~\ref{sec:partial-sync}) and block
        synchrony (Section~\ref{sec:block-sync}) hold true.
    \end{itemize}
\end{theorem}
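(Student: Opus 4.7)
The plan is to combine the view-progress guarantee of Lemma~\ref{lem:faulty-leader-progress} with the fair-leader assumption and the single-view commit guarantee of Lemma~\ref{lem:same-view-commit}. Intuitively, after \gst{} the views advance without bound, so fairness eventually schedules a correct leader in a view entered after \gst{}, at which point Lemma~\ref{lem:same-view-commit} forces every correct replica to commit a block in that view; iterating this argument yields unbounded growth of the committed block numbers.

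The first step is to establish \emph{unbounded view progress}: by iterating Lemma~\ref{lem:faulty-leader-progress}, for every $k \in \NatZero{}$ and every $t_0 \ge \gst{}$, by the time $t_0 + k(\timeout + 2\delay)$ every correct replica resides in a view of index at least $v_{max}(t_0) + k$, where $v_{max}(t_0)$ denotes the maximum view of a correct replica at time $t_0$. In particular, view indices among correct replicas grow without bound after \gst{}. Combining this with the fair-leader assumption, for any threshold $v_0$ I can find a view $v^* \ge v_0$ with $\leader{v^*}$ correct and a time $t^* \ge \gst{}$ at which a correct replica \emph{first} enters $v^*$, so that $v^* = v_{max}(t^*)$ and the hypotheses of Lemma~\ref{lem:same-view-commit} apply at $(t^*, v^*)$.

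The second step is to invoke Lemma~\ref{lem:same-view-commit} at $(t^*, v^*)$. Under the assumption $\timeout > \tsync + 2\delay$, every correct replica switches to view $v^* + 1$ by time $t^* + \tsync + 2\delay$ and sends a \textit{NewView} carrying a commit QC for the same block number $n^*$. The handler \texttt{process\_commit\_qc} then commits the corresponding block at each correct replica, provided the block itself is cached; block synchrony (Section~\ref{sec:block-sync}) guarantees that this caching has occurred by time $t^* + \tsync$. To derive the liveness property stated in Section~\ref{sec:properties} in full — that for every $n \in \NatZero{}$ every correct replica eventually commits a block for $n$ — I would argue by induction on $n$: once the highest commit QC observed by all correct replicas has block number $n$ (which occurs within $\tsync$ of the above commit by block synchrony), any subsequent correct leader's justification satisfies $high\_commit\_qc.vote.block\_number \ge n$, and an inspection of \texttt{get\_implied\_block} shows that the implied block number is then strictly greater than $n$ via line~\ref{line:implied-next-block} or line~\ref{line:implied-next-block2}.

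The main obstacle I expect is the \emph{re-proposal} branch of \texttt{get\_implied\_block} at line~\ref{line:implied-reproposal}. A correct leader may be forced to re-propose an existing block rather than advance, so a single correct-leader success does not automatically extend the committed chain. Ruling this out in the limit requires showing that, once the highest observed commit QC among correct replicas has stabilized at block number $n$, the re-proposal branch can fire at most boundedly many times before \texttt{get\_implied\_block} selects a strictly greater block number; the key observation is that after stabilization, the $high\_commit\_qc$ field of any timeout QC assembled from correct-replica timeout votes has block number at least $n$, which eventually dominates the $high\_vote$ block number and disables the re-proposal branch. Combined with the infinitely many correct-leader successes guaranteed by fairness and unbounded view progress, this translates into unbounded growth of the committed block numbers, yielding the Progress property.
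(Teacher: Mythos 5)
Your proposal is correct and follows essentially the same route as the paper: induct on the number of committed blocks, use fairness together with iterated applications of Lemma~\ref{lem:faulty-leader-progress} to reach a view with a correct leader first entered after \gst{}, and then apply Lemma~\ref{lem:same-view-commit} to commit the next block. Your additional discussion of the re-proposal branch of \texttt{get\_implied\_block} addresses a point the paper's proof silently glosses over (the paper simply asserts that Lemma~\ref{lem:same-view-commit} yields $n+1$ committed blocks), so it is a welcome refinement rather than a divergence.
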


\begin{proof}
  We show for an arbitrary~$n \ge 0$, that if the correct replicas have committed~$n$ blocks,
  they eventually commit a block with the number~$n+1$. Let~$v$ be the smallest view, in
  which the correct replicas have committed~$n$ blocks. Since the leader function is fair,
  there is a view~$v' \ge v$ such that $\leader{v'}$ is correct; let~$v'$ be the minimal
  such view. By applying Lemma~\ref{lem:faulty-leader-progress} to the
  views~$v, v+1, \dots, v'$, we show that the correct replicas eventually switch to the 
  view~$v'$. Now we apply Lemma~\ref{lem:same-view-commit} to obtain that all correct
  replicas eventually have $n+1$ committed blocks in view~$v'$.
\end{proof}

\section{Formal Specification and Model Checking}\label{sec:formal-modeling}

Listings~\ref{lst:defs}--\ref{lst:replica2} are written in pseudocode. This pseudocode
is optimized for being concise, yet sufficiently formal, so the readers would be able
to understand the protocol mechanics and to be able to follow the mathematical proofs.
While this approach is customary in the distributed computing literature, it has a few
drawbacks:

\begin{enumerate}
    \item \textbf{Pseudocode cannot be statically analyzed}. As a result, it may contain
    errors, starting with typos and simple type errors (such as missing record fields)
    and continuing with logical errors.

    \item \textbf{Pseudocode cannot be executed}. This complicates the discussions
    about happy execution paths and potential issues. There is always a chance
    that the parties understand pseudocode differently.

    \item \textbf{Pseudocode cannot be checked by a machine}. Related to the issue of static
    analysis and execution, it is impossible to automatically analyze the key properties
    of a protocol written in pseudocode. For example, the protocol designers are mostly
    interested in whether their protocol is safe and live.
\end{enumerate}

In case the protocol has been implemented, the above points are less critical, as all
parties can compare the pseudocode against the implementation, even though the implementation
is bound to contain details that are less relevant for understanding of the protocol.
However, if the protocol is still in the design phase, it is crucial to have a machine-checkable
specification. This was the case for ChonkyBFT. We started with the informal pseudocode specification
of ChonkyBFT that resembles Rust~\cite{ChonkyInformal}. To refine this specification and make
it machine-checkable, we wrote a formal specification in Quint~\cite{Quint} and analyzed it
with the Quint simulator, as well as with the symbolic model checker
Apalache~\cite{Apalache, KKT19, KonnovKM22}.
In the course of writing the formal specification, we have indeed found a number of
issues in the pseudocode at the surface level such as typos, missing definitions, and type errors.
Moreover, the tools have found several paths that could lead to an attack in the implementation,
which were easy to fix.
This is not really surprising, as similar experience was reported multiple times when applying
formal methods such as~\tlap{} to designs of distributed systems, e.g., 
see~\cite{Newcombe14}.
The details of this work can be found in the technical blog post~\cite{ChonkyBlogpost}.

Why did we choose Quint? It the recent years, it became common to specify distributed
protocols in~\tlap{}~\cite{Lamport2002} and analyze them with the model checkers TLC and Apalache.
For instance, there are~\tlap{} specifications of Paxos~\cite{tla-paxos}, Raft~\cite{tla-raft}, 
Tendermint~\cite{TendermintSpec2020}, and, most recently, TetraBFT~\cite{tla-tetrabft}.
A large corpus of~\tlap{} specifications can be found at~\cite{tla-examples}.

Notwithstanding the success of~\tlap{}, the language is known to have a steep learning curve.
This is probably the reason for why beginners prefer to use PlusCal~\cite{pluscal} instead
of~\tlap{}. Even though PlusCal looks more like a programming language, it is significantly
more restricted than~\tlap{}. In our own experience, PlusCal is a good fit for expressing
concurrent algorithms such as mutual exclusion and nonblocking multithreaded algorithms,
but it introduces unnecessary boilerplate when expressing distributed consensus algorithms.
In contrast to PlusCal, Quint borrows syntax from the modern programming languages such as Scala
and Rust, yet it keeps very close correspondence with the logic of~\tlap{}.
In addition to the more familiar syntax, Quint has a built-in type checker, which
enforces the type system that was developed for the model checker Apalache.
On top of that, Quint has a built-in simulator that helps the engineers experimenting
with the protocol without urgency to use more advanced formal verification tools.

\subsection{Specifying ChonkyBFT in Quint}

In the first phase of the specification efforts, we manually translated the pseudo
code specification~\cite{ChonkyInformal} to the Quint specification~\cite{ChonkyQuint}.
Listing~\ref{lst:quint-on-commit} shows a small part of the specification that
processes signed commit votes. The resulting specification was structurally similar to the
original Rust-like pseudocode. However, there are noticeable differences:

\begin{itemize}
    \item Following the methodology of~\tlap{}, we model the protocol as a state machine.
    Hence, we declare state variables for the local states of the correct replicas, as well
    as for the global view of the system. For example, the states of the replicas are tracked
    with the map \texttt{replica\_state}, whereas the broadcasted commit votes are tracked
    with the set \texttt{store\_signed\_commit}. As a result, we have to propagate the values
    of all state variables, even if they are not changed.

    \item We model the potential behavior of the faulty replica by injecting messages
    that could be sent by the faulty replicas at any point in time.

    \item We model timeouts as non-deterministic events. While this is sufficient for reasoning
    about protocol safety, to reason about liveness, we would need a more refined specification.
\end{itemize}

\begin{lstlisting}[float, columns=fullflexible, caption={The handler of commit votes in Quint},
  language=quint, label=lst:quint-on-commit]
action on_commit(id: ReplicaId, signed_vote: SignedCommitVote): bool = all {
  val self = replica_state.get(id)
  all {
    // if the vote isn't current, just ignore it.
    signed_vote.vote.view >= self.view,
    // Check that the signed vote is valid.
    signed_commit_vote_verify(signed_vote),
    // Store the vote. We will never store duplicate (same view and sender) votes.
    is_new_signed_commit_vote_for_replica(id, signed_vote),
    val new_store = store_signed_commit.get(id).union(Set(signed_vote))
    all {
      store_signed_commit' = store_signed_commit.set(id, new_store),
      // Check if we now have a commit QC for this view.
      val qc_opt = get_commit_qc(new_store, signed_vote)
      match (qc_opt) {
        | None => all {
          replica_state' = replica_state, msgs_new_view' = msgs_new_view,
          ghost_justifications' = ghost_justifications,
        }
        | Some(qc) => {
          val self1 = self.process_commit_qc(Some(qc))
          start_new_view(id, self1, signed_vote.vote.view + 1, Commit(qc))
        }
      },
      msgs_signed_commit' = msgs_signed_commit, msgs_proposal' = msgs_proposal,
      msgs_signed_timeout' = msgs_signed_timeout,
      store_signed_timeout' = store_signed_timeout,
      proposer_view' = proposer_view, ghost_step' = OnCommitStep(id),
} } }
\end{lstlisting}

In addition to the specification itself, we have written test runs in Quint.
Listing~\ref{lst:test-run} shows a shortened version of such a test run. These runs are
indispensable in making sure that the expected protocol scenarios are reproducible by
the specification. Further, these runs work as basic tests, making sure that no errors were
introduced in the specification itself. Moreover, we wrote test runs that faulty replicas
may indeed break the protocol safety, when the protocol assumptions do not hold true, e.g.,
when $n < 5f + 1$.

\begin{lstlisting}[float, columns=fullflexible, caption={A test run in Quint},
  language=quint, label=lst:test-run]
// the leaders propose a block and all replicas commit that block
run replicas_normal_case_Test = {
  init_view_1_with_leader(Map(0 -> "n0", 1 -> "n0", 2 -> "n1", 3 -> "n2", 4 -> "n3"))
  // the leader proposes
  .then(all { proposer_step("n0", "val_b0"), unchanged_leader_replica })
  .expect(all_invariants)
  // replicas process the propose message
  .then(all_replicas_get_propose("val_b0", 0, "n0", Timeout(init_timeout_qc)))
  .expect(all_invariants)
  // n0 process the commit messages
  .then(replica_commits("n0", "val_b0", 0, 1))
  // [omitted for brevity...]
  // n4 process the commit messages
  .then(replica_commits("n4", "val_b2", 2, 3))
  .expect(all_invariants)
  .then(all {
    assert(replica_committed("n0") == [ "val_b0", "val_b1", "val_b2" ]),
    // ...
  }) }
\end{lstlisting}

Our specification includes 23 state invariants, starting with basic invariants 
that ensure integrity of the data structures and ending with the crucial safety
invariants such as the property of Agreement. Two such invariants are shown in
Listing~\ref{lst:quint-invariants}.

\begin{lstlisting}[float, columns=fullflexible,
  caption={Examples of state invariants for ChonkyBFT in Quint},
  language=quint, label=lst:quint-invariants]
// a correct proposer should not send different proposals in the same view
val no_proposal_equivocation_inv =
  tuples(msgs_proposal, msgs_proposal).forall(((m1, m2)) => or {
    not(m1.justification.view() == m2.justification.view() and m1.sig == m2.sig),
    FAULTY.exists(id => sig_of_id(id) == m1.sig),
    m1.block == m2.block,
  })

// no two correct replicas disagree on the committed blocks
val agreement_inv = tuples(CORRECT, CORRECT).forall(((id1, id2)) => {
  val blocks1 = replica_state.get(id1).committed_blocks
  val blocks2 = replica_state.get(id2).committed_blocks
  or {
    // ignore this case, as there is a symmetric one
    blocks1.length() > blocks2.length(),
    // the shorter sequence is a prefix of the longer one
    blocks1.indices().forall(i => blocks1[i].block == blocks2[i].block)
  }
})
\end{lstlisting}

\subsection{Simulation and Bounded Model Checking}\label{sec:sim-and-mc}

To quickly debug our specifications, we were periodically checking the state invariants
with the Quint simulator. This simulator randomly picks values from the sets specified
in the Quint actions. The simulator explores executions and evaluates state invariants,
similar to stateful property-based testing. This allowed us to find examples of executions
as well as shallow bugs in a matter of minutes.

Randomized simulation can only produce examples of invariant violation, if it
finds one, but, by no means, it can give us any guarantee of whether a state invariant 
holds true for all reachable protocol states. To this end, we also conducted experiments 
with the symbolic model checker Apalache. The model checker supports two exploration
strategies:

\begin{itemize}
    \item \textbf{Bounded model checking}. The model checker
      checks all executions up to~$k$ steps. Further, if it does not find an invariant
      violation, we have a hard guarantee that this invariant holds true of the specification
      on all states that are reachable within~$k$ steps.

    \item \textbf{Randomized symbolic simulation}. The model checker constructs a symbolic
      path by picking actions at random. It may miss invariant violation, when it requires
      a specific unexplored path. However, the state coverage of randomized symbolic execution
      is much higher than the state coverage of random simulation.
\end{itemize}

As it often happens with specifications of sophisticated fault-tolerant distributed
algorithms, bounded model checking was useful in finding invariant violations up to
about~10 steps, but it dramatically slowed down afterwards. Hence, we mostly used 
randomized symbolic simulation, as our main tool for making sure that the invariants were
not violated. To this end, we were running dozens of experiments with GNU
parallel~\cite{tange_2024_11247979}, to symbolically explore protocol executions up
to 25--30 steps. Further details are to be found in the blog post~\cite{ChonkyBlogpost}.

\subsection{Showing Safety by Constructing an Inductive Invariant}

As we have discussed in Section~\ref{sec:sim-and-mc}, symbolic simulation,
bounded model checking, and randomized symbolic execution increase our confidence in
protocol correctness to various degrees, but none of these techniques alone give us a
guarantee of safety for arbitrary protocol executions.

Since we wanted to obtain guarantees for executions of arbitrary lengths, we have employed
another technique supported by Apalache, namely, the technique of inductive invariants.
To use this technique, we have to find a state predicate~$\mathit{IndInv}$ that captures all
reachable states of the protocol. Further, we run the model checker to show the three conditions:

\begin{align}
  \mathit{Init} & \Rightarrow \mathit{IndInv} \label{eq:indinv-init}\\
  \mathit{IndInv} \land \mathit{Step} & \Rightarrow \mathit{IndInv}' \label{eq:indinv-step}\\
  \mathit{IndInv} & \Rightarrow \mathit{Agreement} \label{eq:indinv-agreement}
\end{align}

Apalache can check the conditions~(\ref{eq:indinv-init})--(\ref{eq:indinv-agreement}) for bounded
state spaces. However, it cannot automatically construct~$\mathit{IndInv}$. In theory, the
problem of automatically constructing such predicates is notoriously difficult. In practice,
we construct~$\mathit{IndInv}$ as a conjunction of a large number of small lemmas, each
of the lemmas capturing a single aspect of the invariant. For example, Listing~\ref{lst:lemma8}
shows one of the lemmas that we have written.

\begin{lstlisting}[float, columns=fullflexible,
  caption={One of the lemmas in the inductive invariant of ChonkyBFT},
  language=quint, label=lst:lemma8]
val lemma8_no_commits_in_future = {
  r::msgs_signed_commit.forall(m => or {
    IFAULTY.contains(m.sig),
    val self = r::replica_state.get(m.sig)
    or {
      m.vote.view < self.view,
      m.vote.view == self.view and self.phase != PhasePrepare,
    }
  })
}
\end{lstlisting}

We started with a small set of lemmas that captured the basic properties such as:

\begin{itemize}
    \item No equivocation by the correct replicas,
    \item Inability of the faulty replicas to sign messages on behalf of the correct replicas,
    \item Inability of the correct replicas to send messages from their own future,
    \item Support of commit and timeout quorum certificates with necessary messages.
\end{itemize}

Then, by iteratively running the model checker, we were checking the
conditions~(\ref{eq:indinv-init})--(\ref{eq:indinv-agreement}). As soon as Apalache was
reporting a counterexample, we were using it as the input to refine the existing lemmas,
or to introduce new lemmas.

The final inductive invariant consists of 18 such lemmas. Since it takes the model checker
significant time to check the condition~(\ref{eq:indinv-step}), we automatically decompose
the lemmas into smaller pieces and check them independently. For example, when a lemma has
the shape~$A_1 \land \dots \land A_k$, the conjuncts $A_1, \dots, A_k$ are checked as parallel
jobs. In the final iterations, the experiments consisted of 199 jobs.
To make use of this parallelism, we ran the experiments on a machine that is equipped with 
an AMD EPYC 7401P 24-Core Processor and 256G RAM.

\paragraph{Challenges.} When we started to construct the inductive invariant, we 
encountered
several difficulties. Some of them were typical for applying symbolic model checking to
this task, whereas some of the challenges were new. We briefly discuss these challenges,
to highlight the potential for further improvement. The scope of such improvements
typically belongs to long-term academic research rather than industrial engagements.

\textit{Restricting the parameter space}. Apalache encodes the model checking
checking problem in the quantifier-free fragment of SMT. Hence, we have to restrict
the input parameters. For example, we restrict the replica set to six replicas,
at most one of them being faulty. In addition, we restrict the set of views to three
elements.

\textit{Bounding the size of the initial data structures}. When the model checker
starts symbolic exploration from an initial state, the data structures such as the
sets of sent and received messages are quite small (usually empty). In the course of
symbolic exploration, these data structures grow. In the case of an inductive invariant,
we have to start in an arbitrary state that satisfies the invariant. In an arbitrary
state, the sets of messages can be astronomically large. For example, if we have
6 replicas, up to 3 views, up to 2 block numbers, and up to 3 block hashes, the
set of signed commit votes can have up to 108 elements. Under the same
restrictions, the set of proposals that use commit QC as a justification is as large
as 248 thousand elements. Since timeout votes carry commit votes and commit QCs,
we quickly arrive at extremely large sets.

Although potential sets of sent messages are extremely large, it is clear that
a single replica does not need to receive thousands of messages to take a single step.
For example, when processing a proposal, a replica needs a single proposal message.
Given this intuition, it may seem that the right approach is to select tiny sets of
arbitrary elements, e.g., a singleton set of proposals, a six-element set of commit
messages, and a six-element set of timeout messages. Even though this approach is
appealing, it would be insufficient, as the inductive invariant involves reasoning that
requires every new message to be supported by the messages received in the previous views.
Hence, these small subsets must be large enough to build such chains of reasoning.
Thus, we restrict the cardinalities of the message sets, given the specification 
parameters. Similar pragmatic approach to verification is employed in
Alloy~\cite{jackson2012software}, which is built around the \emph{small scope hypothesis},
that is, if a specification has an issue, this issue should be detectable on a small
input set.

\textit{Dealing with astronomic sizes of nested data structures}. Further, many data
structures in ChonkyBFT carry quorum certificates. This significantly simplifies
the manual proof and makes the algorithm easier to understand. However, Apalache constructs
symbolic constraints for the data structures such as the sets of messages and the replica
states. The potential sets of quorum certificates can grow very fast, as they contain
subsets of replicas. Moreover, timeout QCs refer to high commit QCs. To avoid this
kind of combinatorial explosion, we introduce small sets of timeout QCs and commit
QCs. As a result, all data structures refer to the QCs via their integer identifiers.
This approach significantly reduces the number of SMT constraints.

\section{Evaluation}

In this section, we evaluate the performance of ChonkyBFT consensus protocol through a series of experiments. We conducted tests with varying numbers of validators and block sizes to assess scalability, resource utilization, and latency. Only consensus throughput has been tested (i.e., block synchronization performance was out of scope). All experiments are performed under failure-free conditions.

\subsection{Experimental Setup}  
Each node was deployed as an \texttt{e2-highcpu-8} virtual machine (VM) on Google Cloud Platform (GCP) with a network bandwidth of \texttt{1 Gb/s egress}.

The nodes were distributed across all GCP regions in a round-robin manner, with the order of regions randomized. This ensured that the number of nodes per region differed by at most one, providing an even geographic distribution of nodes across the globe, even for smaller configurations.

To optimize network utilization, the system variable \texttt{net.ipv4.tcp\_slow\_start\_after\_idle} was set to 0 on all VMs. This adjustment was necessary because leader proposals, which dominate consensus network traffic, produce large bursts of data separated by periods of inactivity exceeding 10 seconds. The default TCP behavior under these conditions resulted in suboptimal performance.

We conducted experiments with validator sets of 6, 25, and 100 nodes, where all nodes functioned as validators. Block sizes of 1\,MB and 0.1\,MB were tested. Each experiment was executed for 1 to 1.5 hours.

\subsection{Failure-free Performance}

We collected various performance metrics during the experiments, including CPU and RAM usage, block rate, message processing times, and replication lag. The results are summarized below.

\begin{table}[h!]
    \centering
    \caption{Performance Metrics for Different Validator Configurations}
    \label{tab:performance_metrics}
    \resizebox{\textwidth}{!}{\begin{tabular}{|l|c|c|c|c|}
        \hline
        \textbf{Metric} & \textbf{6 Validators (1MB)} & \textbf{25 Validators (1MB)} & \textbf{100 Validators (1MB)} & \textbf{100 Validators (0.1MB)} \\ \hline
        \textbf{CPU Usage}       & 0.5 cores avg, 0.7 cores max & 1 core avg, 1.5 cores max & 2.2 cores avg, 3.5--4 cores max & 2.2 cores avg, 3.2--4 cores max \\ \hline
        \textbf{RAM Usage}       & 0.7\,GB avg, spikes to 4.5\,GB & 0.7\,GB avg, spikes to 5\,GB & 1\,GB avg, spikes to 6\,GB & 1\,GB avg, spikes to 6\,GB \\ \hline
        \textbf{Round-trip Latency} & 120\,ms avg              & 130\,ms avg              & 140\,ms avg               & 140\,ms avg               \\ \hline
        \textbf{Block Rate}      & 3.7 blocks/s (83\% opt.)    & 3.3 blocks/s (80\% opt.) & 2.1 blocks/s (52\% opt.)  & 3 blocks/s (80\% opt.)    \\ \hline
        \textbf{Leader Proposal} & 17\,ms/msg                & 20\,ms/msg               & 21\,ms/msg                & 7\,ms/msg                 \\ \hline
        \textbf{Replica Commit}  & 4\,ms/msg                 & 3\,ms/msg                & 2.6\,ms/msg               & 2.9\,ms/msg               \\ \hline
        \textbf{New View}        & 4--24\,ms/msg             & 11\,ms/msg               & 7--30\,ms/msg             & 9\,ms/msg                 \\ \hline
        \textbf{Replication Lag} & 0--2 blocks avg, 1--5 blocks max & 0--7 blocks avg, 1--10 blocks max & 1--7 blocks avg, 3--10 blocks max & 1--7 blocks avg, 3--10 blocks max \\ \hline
    \end{tabular}
    }
\end{table}

\textbf{Block Rate.}  Reducing the number of validators leads to a block rate of 3.7 blocks per second. Decreasing the block size to 0.1\,MB resulted in a block rate of 3 blocks per second. Optimizing these parameters is beyond the scope of this report. Our current BFT implementation achieves an optimal number of round-trips. The next bottleneck worth addressing is the \textit{proposal broadcast bandwidth}; currently, the leader sends a separate copy of the proposal to each replica.

\textbf{Message Processing Time.} We observed instability in the new view processing time in both the 6-validator and 100-validator setups. The cause of this instability is currently unknown. Notably, almost all new view messages included commit certificates rather than timeout certificates, which should have ensured a consistent workload. Additionally, processing duplicate new view messages resulted in errors, and such duplicates were excluded from the metrics. As a result, they should not have influenced the statistics.

\textbf{CPU Usage.} The experiments demonstrate that the consensus protocol scales effectively with the number of validators and maintains reasonable resource utilization. The block rate and CPU usage are influenced by the number of validators and block size, highlighting areas for potential optimization, such as improving proposal broadcast efficiency. Future work will focus on addressing observed instabilities in message processing times and further optimizing resource utilization.

\section{Conclusions}

ChonkyBFT represents a significant advancement in Byzantine fault-tolerant consensus protocols, tailored to meet the practical demands of modern blockchain systems like ZKsync. By prioritizing simplicity, single-slot finality, and reduced systemic complexity, the protocol achieves a balance between theoretical rigor and real-world applicability. Its hybrid design, drawing from FaB Paxos, Fast-HotStuff, and HotStuff-2, enables low user-perceived latency — a critical requirement for high-performance rollups.

Key properties include:
\begin{itemize}
    \item \textbf{Quadratic communication with $n \ge 5f+1$ fault tolerance.} The protocol achieves safety under adversarial conditions with $n \ge 5f+1$ replicas, ensuring tolerance for up to $f$ Byzantine faults. This trade-off enables single voting-round finality while maintaining quadratic message complexity, simplifying implementation, and analysis.
    \item \textbf{Re-proposal mechanisms.} To prevent rogue blocks and ensure safety under adversarial conditions, ChonkyBFT mandates re-proposing previous blocks if a timeout occurs where a commit quorum certificate might have formed. This guarantees chain consistency even during network partitions or malicious leader behavior.
    \item \textbf{TimeoutQC-driven view changes.} Liveness is preserved through timeout QCs, ensuring deterministic progress without relying on timing assumptions. This avoids subtle vulnerabilities seen in protocols like Tendermint while maintaining safety.
     \item \textbf{Formal verification.} Safety properties were rigorously validated using Quint and Apalache, exposing edge cases and ensuring protocol correctness. This approach enhances trust in the protocol and provides a blueprint for integrating formal methods into consensus design.
\end{itemize}

Empirical evaluations demonstrated ChonkyBFT's scalability, with throughput remaining stable across configurations (6–100 validators) and block sizes (0.1–1 MB). The protocol achieved 3.7 blocks/second with 6 validators and maintained 2.1 blocks/second at 100 nodes, showing quadratic communication's practicality in global networks for moderate numbers of validators.

The protocol's safety proofs and model-checked invariants provide strong assurances for deployment, addressing historical challenges in BFT consensus verification. By decoupling theoretical metrics from practical performance considerations, ChonkyBFT offers a blueprint for consensus designs that value implementability alongside algorithmic elegance. Future work includes enhancing proposal dissemination mechanisms, improving fault tolerance to $n \ge 5f-1$~\cite{NoteFAB}, and further formalization in proof assistants like Isabelle, Coq, or Lean.

\section*{Acknowledgments}

We thank Denis Firsov for his valuable discussions on the protocol specification and 
verification efforts. In particular, he suggested introducing separate lists for quorum 
certificates, which was crucial in enabling the efficient checking of the inductive 
invariant.

\bibliographystyle{unsrt}
\bibliography{bibliography}

\end{document}